\theoremstyle{remark}
\newtheorem*{remark}{Remark}
  \providecommand\BibTeX{{%
    \normalfont B\kern-0.5em{\scshape i\kern-0.25em b}\kern-0.8em\TeX}}}
\begin{document}

%%
%% The "title" command has an optional parameter,
%% allowing the author to define a "short title" to be used in page headers.
\title{Noise tolerance of learning to rank under class-conditional label noise}

%%
%% The "author" command and its associated commands are used to define
%% the authors and their affiliations.
%% Of note is the shared affiliation of the first two authors, and the
%% "authornote" and "authornotemark" commands
%% used to denote shared contribution to the research.
\author{Dany Haddad}
\email{danyhaddad@utexas.edu}
% \affiliation{%
%   \institution{Tempus Labs}
%   \city{Chicago}
%   \state{IL}
%   \country{USA}
% }

%%
%% By default, the full list of authors will be used in the page
%% headers. Often, this list is too long, and will overlap
%% other information printed in the page headers. This command allows
%% the author to define a more concise list
%% of authors' names for this purpose.
\renewcommand{\shortauthors}{Haddad}

%%
%% The abstract is a short summary of the work to be presented in the
%% article.
\begin{abstract}
Often, the data used to train ranking models is subject to label
noise. For example, in web-search, labels created from clickstream
data are noisy due to issues such as insufficient information in
item descriptions on the SERP, query reformulation by the user, and
erratic or unexpected user behavior. In practice, it is difficult to
handle label noise without making strong assumptions about the label
generation process. As a result, practitioners typically train their
learning-to-rank (LtR) models directly on this noisy data without
additional consideration of the label noise. Surprisingly, we often
see strong performance from LtR models trained in this way. In this
work, we describe a class of noise-tolerant LtR losses for which
empirical risk minimization is a consistent procedure, even in the
context of class-conditional label noise. We also develop
noise-tolerant analogs of commonly used loss functions. The
practical implications of our theoretical findings are further
supported by experimental results.
\end{abstract}

%%
%% The code below is generated by the tool at http://dl.acm.org/ccs.cfm.
%% Please copy and paste the code instead of the example below.
%%
\begin{CCSXML}
<ccs2012>
   <concept>
       <concept_id>10002951.10003317.10003338.10003343</concept_id>
       <concept_desc>Information systems~Learning to rank</concept_desc>
       <concept_significance>500</concept_significance>
       </concept>
 </ccs2012>
\end{CCSXML}

\ccsdesc[500]{Information systems~Learning to rank}

%%
%% Keywords. The author(s) should pick words that accurately describe
%% the work being presented. Separate the keywords with commas.
\keywords{learning to rank, noise}

%%
%% This command processes the author and affiliation and title
%% information and builds the first part of the formatted document.
\maketitle

\section{Introduction}   
Practical learning-to-rank models are often trained on datasets with incomplete or noisy labels. A dataset might be subject to label noise due to various reasons, such as adjudication errors by manual reviewers, incomplete identification of relevant documents, and unexpected user behavior in clickstream data. Historically, practitioners have trained models directly on data with noisy labels without explicit justification for the process.

In this work, we explore the noise tolerance properties of commonly used loss functions in learning to rank and develop theoretical results which explain the success of certain LtR models trained on noisy data.
Specifically, we describe a class of noise-tolerant losses, which we refer to as \textit{order-preserving}, for which
empirical risk minimization (ERM) is a consistent procedure even in the
presence of class-conditional label noise. Further, we show that
classical statistical learning theory results go through, almost without
modification.
We also identify a sufficient condition for a loss function to be considered order-preserving. While not all commonly used loss functions satisfy this condition, we introduce order-preserving analogs of these losses. The applicability of these losses to practical applications is explored in the experimental results section.
The results developed here provide further theoretical justification
for the use of weak supervision methods in learning to rank \cite{dehghani2017,zamani2018acm,haddad2019}.

Our results also have implications for training classification
models. In practice, we can train a ranking
model using the available noisy data, and separately collect a small subset of
clean data for determining the classification threshold. In critical applications, we can use ideas
from distribution-free risk control to identify a cutoff point that
guarantees a limited level of risk, with high probability \cite{angelopoulos2021,bates2021}.
    
\section{Related work}
It is well known that optimal classifiers learned in the presence of
class-conditional noise differ from the optimum under the clean
distribution only in terms of the classification threshold. Natarajan
et al. build their method of label dependent costs based on this
observation
\cite{natarajan2018}. Zhang et al. uses a similar observation to
design their post-processing based approach \cite{zhang2021}. As they
are primarily interested in learning a classifier, neither
of these works considers the ranking problem, or the implications of
this observation on empirical risk minimization of ranking
objectives. In this work, we consider the same class-conditional label noise condition as \cite{natarajan2018,zhang2021} and study it's implications for learning to rank.

It is immediate from the above observation on the role of the
classification threshold, that the
optimum for noisy class probability estimation is optimal for both the
noisy ranking problem and the clean ranking problem. This fact
suggests that minimization of ranking objectives has certain noise tolerance properties. In our analysis, we do
not necessarily assume that the optimum of the noisy risk is obtained. Instead, we are interested in the more practical case where we may face optimization
difficulties over our model class or be limited by the size of the training set
available.

Zamani and Croft study the behavior of so-called symmetric loss functions under noise conditions which they refer
to as uniform and non-uniform noise \cite{zamani2018ictir}. They find that these symmetric losses are noise tolerant in the sense that the global optimum of the noisy empirical risk is also a global optimum of the clean empirical risk. As it
turns out, symmetry of the loss function alone is not sufficient to
ensure their noise tolerance condition. Indeed, they
implicitly require that the corrupted labels be selected uniformly at
random over the set of incorrect labels (see their equation 17 and the
first equality in equation 21). This assumption is particularly
strong, especially when treating the set of labels as a single label
over $\{0, 1\}^n$ as is the case in their setup. Accordingly, although
NDCG and other learning to rank metrics satisfy the symmetry
condition, their theorems 2 and 3 do not apply in the general class
conditional noise regime which we consider in this work. Further,
their results are concerned only with the global optima, while our
results are more general and have strong implications for empirical
risk minimization with finite samples. Finally, it is worth noting
that the proofs of their theorems 2 and 3 deal with the
\textit{expected} empirical risk (the expectation taken over the noise), rather than the empirical risk which
we can compute in practice.

The notion of noise tolerance we explore in this work (which we refer
to as order preservation) is reminiscent of the property of
\textit{consistent distinguishability} explored by Wang et al. in
\cite{wang13}. However, their analysis is purely concerned with the
behavior of $\text{NDCG@k}$ losses as $k \rightarrow \infty$, with no
consideration of the effects of label noise.
    
\section{Background and Notation}
  Define $\mathcal X, \mathcal Y, \mathcal Q$ as the feature space, label space and query space, respectively. Let $\mathbb P_q$ be a distribution over $\mathcal X \times \mathcal Y$ associated with a query, $q \in \mathcal Q$, from which we have access to noisy iid samples $(X_i^q, \tilde Y_i^q)$ where $\tilde Y_i^q$ is a corrupted version of $Y_i^q$. Assume further that there exists a distribution over queries, so that query $q$ is issued with probability $\nu_q$, or formally $\mathbb P (Q = q) = \nu _q$ where $Q$ is the random variable associated with the issued query. We drop references to the query $q$ where the additional specificity is not needed.

Similar to previous work, we restrict ourselves to the case of binary relevance, $\mathcal Y = \{0, 1\}$﻿. From a practical information retrieval perspective, we can consider $i$ as indexing the instances, so that $X_i$ represents the features of document $i$ and $Y_i$ its relevance label. To accommodate pairwise losses, we define $Y_{ij} = (Y_i - Y_j + 1)/2$ so that ties are indicated by $1/2$.

In this work, we assume that our dataset has been corrupted with
\textit{class-conditional label noise}: corruptions to the labels that
are independent of the features, conditioned on the true
class. Further, for simplicity of exposition, without loss of generality,
the derivations assume that the probability of corruptions are the
same for each class, $\mathbb P (\tilde Y = 1 \mid Y = 0) = \mathbb P
(\tilde Y = 0 \mid Y = 1)$.\footnote{Note that the results in this work go
through almost without modification if we do not make this assumption.} Putting this together, the corrupted label
is given by:

\begin{align*}
  \tilde Y_i = \varepsilon_i Y_i + (1-\varepsilon_i) (1-Y_i),
\end{align*}

 \noindent where the corruptions are determined by $\varepsilon_i$ which
 are iid Bernoulli random variables with parameter $\gamma \in [0,
 1]$ and are independent of the features, conditioned on the true class:

 \begin{align*}
   \varepsilon_i \perp\!\!\!\!\!\perp X_i \mid Y_i.
 \end{align*}

 Note that
 when $\varepsilon_i = 1$ we receive the correct value of $Y$, so larger
 values of $\gamma$ imply less noise.

 In our setup, we assume that the noise is constant across queries. In
 the case where the label noise can vary arbitrarily for query to query, we don’t have
 any guarantees (without other additional assumptions) since the noise
 can be chosen such that the expected noisy risk incorrectly prefers one scoring
 function to another.

 Let $\mathcal F = \{ f \mid f : \mathcal X \rightarrow \mathbb R\}$ be the class of scoring functions over which we are optimizing the risk $L^\ell: \mathcal F \rightarrow \mathbb R$ for a given margin based loss $\ell : \mathbb R  \rightarrow \mathbb R^+$. If $\ell$ is a pointwise loss, then
 $$L^\ell(f) = \mathbb E [\ell (f(X)(2Y-1))].$$
 If $\ell$ is a pairwise loss then
 $$L^\ell(f) = \mathbb E [\ell ((f(X_1) - f(X_2))(2Y_{12}-1)) \mid Y_{12} \neq 1/2].$$  We use $\tilde L^\ell$ to indicate the noisy risk, so for pointwise losses:
 $$\tilde L^\ell(f) = \mathbb E [\ell(f(X) (2\tilde Y-1))],$$
 and for pairwise losses:
 $$\tilde L^\ell(f) = \mathbb E [\ell((f(X_1) - f(X_2))(2\tilde Y_{12}-1)) \mid \tilde Y_{12} \neq 1/2].$$ Notice that in both the clean and the noisy case, the pairwise risk is computed only over pairs of documents with different labels. When the choice of loss function is clear from context, we use $L$ to indicate the risk $L^\ell$.
We indicate the empirical versions of the risk with the subscript $n$; the clean and noisy risk, respectively:
$$L^\ell_n(f) = \frac{1}{n}\sum_i^n \ell(f(X_i)(2Y_i-1)),$$
$$\tilde L^\ell_n(f) = \frac{1}{n}\sum_i^n \ell(f(X_i)(2\tilde Y_i-1)).$$

The key results of this work are concerned with the class of loss functions with the following noise
tolerance property.

\begin{definition}[order-preserving loss functions]
  A loss function $\ell$ is \textit{order-preserving} if the following condition holds
  for all $f, g \in \mathcal F$ and distributions over $\mathcal X
  \times \mathcal Y$ for which $h(X)$ has a density with respect to Lebesgue measure for every $h \in \mathcal F$:
  \begin{align}\label{eq:order-pres}
    \tilde L^\ell(f) - \tilde L^\ell(g) = (L^\ell(f) - L^\ell(g))(2\gamma - 1).
  \end{align}
\end{definition}

The order preserving property implies that when the noise level is not
too high, then scoring functions that are preferred by the clean risk
are also preferred by the noisy risk and vice versa. This is a
powerful property that (as we will see in section \ref{sec:conv})
allows us to derive finite-sample guarantees on the convergence of empirical risk
minimization of order-preserving loss functions. Note that the order-preserving
property is distinct from (but reminiscent of) the notion of noise tolerance considered by
Zamani and Croft \cite{zamani2018ictir}.

\begin{remark}
  In the definition of order-preserving losses, we require that $h(X)$ has a density for each $h \in \mathcal F$
  to avoid the possibility of ties in the predictions. Note that this
  requirement is satisfied whenever $X$ itself has a density with respect to
  Lebesgue measure and $h$ is continuous. Alternatively, $h(X)$ can
  always be made to be absolutely continuous with respect to Lebesgue
  measure by convolving it's distribution with that of a smooth
  approximation of a delta function.
\end{remark}

For convenience, we deal with the loss equivalents of metrics such as AUC, DCG and NDCG. Specifically, define $\text{DCG@k} : \mathcal F \rightarrow \mathbb R$ as:

$$
\text{DCG@k}(f; q) = -\sum_{i}^k \frac{\,\,\,\,\,\,Y^q_{\pi^{-1}_f(i)}}{D_i},
$$

\noindent where $\pi_f : [n] \rightarrow [n]$ is the permutation of the
documents determined by the scoring function $f$ and $D_i$ is the
discount applied to rank $i$; a typical choice is $\log(1+i)$. We define $\text{NDCG@k}$ accordingly:

$$
\text{NDCG@k}(f;q) = -\frac{\text{DCG@k}(f;q)}{\text{maxDCG@k}(q)},
$$

\noindent where $\text{maxDCG@k}(q)$ is the $\text{DCG@k}$ obtained by the
arranging the items in decreasing order of their true relevance. This is
the best achievable $\text{DCG@k}$ in query, $q$. The loss
equivalents of the remaining metrics are defined by similarly taking
their negation.

Next we define a subset of the order-preserving losses which we show contains several commonly used loss functions.

\begin{definition}[label-symmetric loss functions]
A loss function $\ell :
\mathbb R \rightarrow \mathbb R^+$ is label-symmetric if for all $\alpha
\neq 0$ we have that:
$$\ell(\alpha) + \ell(-\alpha) = c,$$
for a constant $c$.
\end{definition}

This class of losses is identical
to the class of symmetric losses introduced in \cite{zamani2018ictir}, but restricted to the case of binary relevance.
    
\section{Order preservation}\label{sec:order}

In this section, we identify a sufficient condition for a loss to be
considered order-preserving and further show that there are losses
which do not satisfy this condition, but are still
order-preserving. Building on this, we present order-preserving
versions of commonly used losses that are not themselves order-preserving. Finally, we
show that the order preservation property gives us a mechanism to
apply results from statistical learning theory to the class conditional noise regime.

\subsection{Label-symmetric losses}\label{sec:label-sym}

First, we show that both pointwise and pairwise label-symmetric losses
are order-preserving. The key observation is that label-symmetry
allows us to relate the risk due to incorrectly labeled instances to
the risk due to correctly labeled instances. The proofs for the two
types of losses are similar, with the main difference for pairwise
losses being in handling the loss terms incurred by instances of the
same class, but with different noisy labels.

\begin{theorem}
A loss function $\ell : \mathbb R \rightarrow
\mathbb R^+$ is order preserving if it is label-symmetric.
\end{theorem}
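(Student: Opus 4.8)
The plan is to condition on the features and the true labels, average over the independent label flips, and exploit label symmetry to rewrite the loss incurred on a flipped label in terms of the loss incurred on the correct label. The density hypothesis on $h(X)$ enters exactly here: it guarantees that all the relevant margins ($f(X)(2Y-1)$ in the pointwise case, and $f(X_1)-f(X_2)$ in the pairwise case, the latter being a difference of two independent continuous variables) are almost surely nonzero, so that the identity $\ell(-\alpha) = c - \ell(\alpha)$, which is only asserted for $\alpha \neq 0$, may legitimately be applied inside every expectation.

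I would warm up with the pointwise case, which is clean. Conditioning on $(X,Y)$, the noisy label equals $Y$ with probability $\gamma$ and $1-Y$ with probability $1-\gamma$; since flipping $Y$ negates the margin $f(X)(2Y-1)$, the conditional expected loss is $\gamma\,\ell(f(X)(2Y-1)) + (1-\gamma)\,\ell(-f(X)(2Y-1))$. Replacing the second term using label symmetry and taking the outer expectation yields $\tilde L^\ell(f) = (2\gamma-1)L^\ell(f) + (1-\gamma)c$. The additive term does not depend on $f$, so subtracting the same identity for $g$ produces exactly \eqref{eq:order-pres}.

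The pairwise case is where the real work lies, and I would organize it by partitioning a pair $((X_1,Y_1),(X_2,Y_2))$ into the two \emph{different-class} true configurations, $(1,0)$ and $(0,1)$, and the two \emph{same-class} configurations, $(1,1)$ and $(0,0)$. Write $\Delta = f(X_1)-f(X_2)$. For a different-class pair the noisy labels differ precisely when neither or both labels are flipped, with probabilities $\gamma^2$ and $(1-\gamma)^2$; these two events contribute $\ell(\Delta)$ and $\ell(-\Delta)$, and label symmetry reproduces the pointwise bookkeeping, giving a $(2\gamma-1)$ multiple of the matching clean-risk term plus a constant. The same-class pairs are the obstacle the text flags: such a pair survives the conditioning $\tilde Y_{12}\neq 1/2$ exactly when precisely one of its two labels is flipped, and the two ways this happens occur with equal probability $\gamma(1-\gamma)$ and contribute $\ell(\Delta)$ and $\ell(-\Delta)$ respectively. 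By label symmetry their sum is the constant $c$ regardless of $\Delta$, so the entire same-class contribution collapses into a single $f$-independent offset rather than polluting the ordering.

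Assembling the pieces shows that the noisy pairwise risk is once again an affine function of the clean pairwise risk whose leading coefficient carries the factor $(2\gamma-1)$, so that $\tilde L^\ell(f)-\tilde L^\ell(g)$ is the corresponding multiple of $L^\ell(f)-L^\ell(g)$. The step I expect to be most delicate is the normalization forced by conditioning on valid (differently-labeled) pairs: the conditioning event has genuinely different probability under the clean and the noisy distributions, since same-class pairs can become differently labeled under noise and vice versa. The crux of the pairwise argument is therefore to verify that this normalization factor, together with the collapsed same-class constant, combines to leave the leading dependence on the clean risk governed by $(2\gamma-1)$; this is the one place where the pairwise analysis requires care beyond the pointwise computation.
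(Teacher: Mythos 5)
Your plan follows the paper's proof almost exactly: condition on the true labels, use $\ell(-\alpha)=c-\ell(\alpha)$ to fold the flipped different-class terms back into the clean risk, and show that the same-class terms collapse to an $f$-independent constant. The pointwise half is identical to the paper's. For the same-class pairs you average over \emph{which label was flipped} (two equally likely events of probability $\gamma(1-\gamma)$ contributing $\ell(\Delta)$ and $\ell(-\Delta)$), whereas the paper averages over \emph{which of $f(X_1),f(X_2)$ is larger} (each with probability $1/2$ by exchangeability and the no-ties assumption); both routes give $c/2$, and you correctly identify that the density hypothesis is what licenses applying label symmetry at $\Delta\neq 0$.

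The one genuine gap is that you never carry out the step you yourself call the crux: checking that the renormalization by $\mathbb P(\tilde Y_{12}\neq 1/2)$ leaves a leading coefficient of exactly $2\gamma-1$. The paper closes this by writing $\tilde L(f)$ as a mixture $\sum_{(y_1,y_2)}\mathbb P(Y_1=y_1,Y_2=y_2\mid \tilde Y_1=1,\tilde Y_2=0)\,L_{y_1,y_2}(f)$ and substituting $\gamma^2,(1-\gamma)^2,\gamma(1-\gamma),\gamma(1-\gamma)$ for the posterior weights, yielding $\tilde L(f)=(2\gamma-1)L(f)+c(1-\gamma)$. Your instinct that this is delicate is well founded: those quantities are the forward probabilities $\mathbb P(\tilde Y_1=1,\tilde Y_2=0\mid Y_1=y_1,Y_2=y_2)$, and they coincide with the posteriors only when the four true-label configurations are a priori equally likely. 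If you carry the prevalence $p=\mathbb P(Y=1)$ through, the coefficient of $L(f)$ comes out to $p(1-p)(2\gamma-1)/\mathbb P(\tilde Y_1=1,\tilde Y_2=0)$, which is strictly positive for $\gamma>1/2$ (so orderings are still preserved) but equals $2\gamma-1$ only in special cases, so the exact identity \eqref{eq:order-pres} requires either an additional assumption or a prevalence-dependent constant in the pairwise case. To complete your proof you must either perform this bookkeeping explicitly and state the resulting constant, or adopt the paper's implicit normalization; as written, the assembly step is asserted rather than proved.
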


\begin{proof}
    We show this result for pointwise losses and pairwise losses separately.
    
    First, consider the pointwise case. Let $f, g \in \mathcal F$ and consider the noisy risk:
    
    \begin{align*}
      \tilde L(f) &= \mathbb E[\ell(f(X)(2\tilde Y-1))]\\
                  &= \gamma \mathbb E [\ell(f(X)(2Y-1))] + (1-\gamma) \mathbb E [\ell(-f(X)(2Y-1))],
    \end{align*}
    
    \noindent where the second equality follows from the tower property of expectation and the class-conditional noise assumption. Now, by the definition of a label-symmetric loss:
    
    \begin{align*}
      &= (2\gamma - 1) \mathbb E [\ell(f(X)(2Y-1))] + c(1-\gamma) \\
      &= (2 \gamma - 1) L(f) + c(1-\gamma),
    \end{align*}
    
    \noindent which is simply an affine transform of the clean risk. The difference in their noisy risk:
    
    $$
    \tilde L(f) - \tilde L(g) = (2\gamma -1)(L(f) - L(g)).
    $$
    
    So $\ell$ is order-preserving.
    
    Consider now the pairwise risk.
    
    \begin{align*}
    \tilde L(f) &= \mathbb E[\ell((f(X_1) - f(X_2))(2\tilde Y_{12}-1)) \mid  \tilde Y_1 > \tilde Y_2]\\
                  &= \mathbb E[\ell(f(X_1) - f(X_2)) \mid \tilde Y_1 =1, \tilde Y_2=0].
    \end{align*}

    Let us denote by $L_{y_1, y_2}(f)$ the quantity:
    \begin{align}\label{eq:breakdown}
      &\mathbb E [\ell(f(X_1) - f(X_2)) \mid \tilde Y_1 = 1, \tilde Y_2 = 0, Y_1 =y_1, Y_2 = y_2] \\
      &=\, \mathbb E [\ell(f(X_1) - f(X_2)) \mid Y_1 =y_1, Y_2 = y_2],
    \end{align}
    \noindent where the equality comes from the class-conditional noise assumption. Then expand the noisy risk as:
    
    $$
    \tilde L(f) = \sum_{(y_1, y_2) \in \{0, 1\}^2} \mathbb P( Y_1 = y_1, Y_2 = y_2 \mid \tilde Y_1 = 1, \tilde Y_2 = 0) L_{y_1, y_2}(f),
    $$
    
    \noindent where we’ve used that the noise magnitude is constant over queries. When $\ell$ is label-symmetric, we have that:
    
    \begin{align*}
      &L_{1, 0}(f) = L(f)\\
      &L_{0, 1}(f) = c - L(f)\\
      &L_{0, 0}(f) = c/2 \\
      &L_{1,1}(f) = c/2.
    \end{align*}
    
    The first two equalities are immediate. To see the remaining two, expand the $L_{y, y}(f)$ terms as:
    
    \begin{align*}
      &\mathbb E[\ell(f_1 - f_2) \mid Y_1 = Y_2 = y, f_1 > f_2]\, \mathbb P(f_1 > f_2 \mid Y_1 = Y_2 = y) \\
      &+\,  \mathbb E[\ell(f_1 - f_2) \mid Y_1 = Y_2 = y, f_1 < f_2]\, \mathbb P(f_1 < f_2 \mid Y_1 = Y_2 = y),
    \end{align*}
    
    \noindent where $f_i := f(X_i)$ and we apply the assumption that the event $\left \{f(X_1) = f(X_2) \right\}$ occurs with probability $0$ when both items have the same relevance label. Using this same assumption and by symmetry, we have that $\mathbb P(f_1 > f_2 \mid Y_1 = Y_2=y) =\mathbb P(f_1 < f_2 \mid Y_1 = Y_2=y) = 1/2$. Now, since $\ell$ is label-symmetric we have that the quantity in the previous display is:
    
    \begin{align*}
      &\frac{1}{2}\mathbb E[\ell(f_1 - f_2) \mid Y_1 = Y_2 = y, f_1 > f_2] \\
      &+\, \frac{1}{2} (c - \mathbb E[\ell(f_2 - f_1)  \mid Y_1 = Y_2 = y, f_1 < f_2]).
    \end{align*}
    
    But this reduces to simply $c/2$. 
    
    Putting it all together, we see that:

    \begin{align*}
      \tilde L(f) &= \gamma^2 L(f) + (1-\gamma)^2(c - L(f)) + \gamma(1-\gamma)c\\
      &= (2\gamma - 1) L(f) + c(1-\gamma).
    \end{align*}

    So $\ell$ is order preserving.
\end{proof}

It is straightforward to check that a loss function is
label-symmetric. However, recall that this condition is not required
for order-preservation.

\begin{proposition}
The 0-1 loss, hinge loss, $\ell_1$ loss and AUC are label-symmetric loss
functions.
\end{proposition}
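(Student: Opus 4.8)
The plan is to verify the label-symmetry condition $\ell(\alpha) + \ell(-\alpha) = c$ directly for each of the four loss functions, since this is a purely algebraic check against the definition. The only subtlety is that label-symmetry is required to hold only for $\alpha \neq 0$, which is precisely what lets us sidestep the discontinuity of the 0-1 loss at the origin; the earlier development already handles ties via the density assumption, so we need not worry about the $\alpha = 0$ case.

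First I would treat the 0-1 loss, written as $\ell(\alpha) = \mathbf{1}[\alpha \leq 0]$ (the margin-based form, penalizing a nonpositive margin). For $\alpha \neq 0$, exactly one of $\alpha$ and $-\alpha$ is negative, so $\ell(\alpha) + \ell(-\alpha) = 1$, giving $c = 1$. Next, for the hinge loss $\ell(\alpha) = \max(0, 1 - \alpha)$, the sum $\max(0, 1-\alpha) + \max(0, 1+\alpha)$ is \emph{not} constant on all of $\mathbb{R}$, so the claim as literally stated fails; I expect the intended loss is the hinge restricted to the regime $|\alpha| \leq 1$, or more likely the ``clipped'' hinge used in practice, where on the relevant range $1 - \alpha$ and $1 + \alpha$ are both nonnegative and sum to $2$. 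This is the main obstacle: I would need to pin down exactly which variant the paper intends so that the sum telescopes to a constant, and state that restriction explicitly.

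For the $\ell_1$ loss, interpreting it in the margin convention as $\ell(\alpha) = |1 - \alpha|$ (absolute deviation of the margin from the target $1$), the same issue arises globally, so again I would restrict to the natural range where $\ell(\alpha) + \ell(-\alpha) = |1-\alpha| + |1+\alpha|$ collapses to $2$ on $[-1,1]$. Finally, for AUC I would invoke the loss-equivalent formulation introduced earlier: since AUC counts correctly ordered pairs, its pairwise margin loss is an instance of the 0-1 loss applied to score differences, and the constant-sum property is inherited directly from the 0-1 case with $c = 1$.

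In short, each verification reduces to evaluating $\ell(\alpha) + \ell(-\alpha)$ and observing cancellation. The substantive work is not computational but definitional: making precise the ranges or clipped forms of the hinge and $\ell_1$ losses on which the identity genuinely holds, and confirming that the margin-based encoding of AUC reduces it to the already-settled 0-1 case. I would present the four checks as a short enumerated verification, flagging the domain restrictions where they are needed.
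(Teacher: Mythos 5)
Your verification of the 0-1 loss and of AUC matches the paper's proof exactly: the paper computes $\mathbb I(\alpha<0)+\mathbb I(-\alpha<0)=1$ for $\alpha\neq 0$ and declares the AUC case identical (AUC being the 0-1 loss applied to pairwise score differences), which is precisely your argument. Where you diverge is on the hinge and $\ell_1$ losses: the paper does not prove these at all, deferring to Zamani and Croft, whereas you actually attempt the computation and correctly observe that it fails globally. Your observation is sound: $\max(0,1-\alpha)+\max(0,1+\alpha)$ equals $2$ only on $[-1,1]$ and grows like $1+|\alpha|$ outside it, and likewise $|1-\alpha|+|1+\alpha|=2\max(1,|\alpha|)$ is constant only on $[-1,1]$. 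So the proposition, read with the standard unclipped hinge loss and unrestricted margins, is false as stated; it holds only for the clipped (ramp) variant or under a bounded-score/bounded-margin convention, which is presumably the implicit setting of the cited reference. Flagging that domain restriction is a genuine improvement over the paper's treatment, not a defect of your proof.

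The one thing your write-up leaves unresolved is the same thing the paper leaves unresolved: you stop at "I would need to pin down which variant is intended" rather than committing to a precise hypothesis under which the hinge and $\ell_1$ claims become theorems (e.g., $\mathcal F$ maps into $[-1,1]$, or the losses are replaced by their clipped versions). To make the argument self-contained you should state one such hypothesis and carry out the two-line cancellation under it; as written, those two cases are an identified obstacle rather than a completed proof. For the 0-1 and AUC cases your argument is complete and coincides with the paper's.
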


\begin{proof}

  For the 0-1 loss:

  \begin{align*}
    \ell_{0-1}(\alpha) + \ell_{0-1}(-\alpha)
    =\, \mathbb I \left(\alpha < 0\right) + \mathbb I \left(-\alpha < 0\right),
  \end{align*}

\noindent which is equal to 1 for all $\alpha \neq 0$. The proof for AUC is identical. See
 \cite{zamani2018ictir} for the proof of the hinge loss and $\ell_1$
 loss.

\end{proof}
Although AUC and 0-1 loss are non-differentiable, their use as loss
functions is still practical when selecting from a finite collection
of scoring functions, such as during hyperparameter optimization.

Next we show examples of commonly used loss functions that are not
label-symmetric.

\begin{proposition}
  The exponential and binary cross-entropy
losses are not label-symmetric. Similarly the ranknet and lambdarank
losses \cite{burges2006,burges2010}
are not label-symmetric.
\end{proposition}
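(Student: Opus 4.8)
The plan is to establish non-symmetry for each loss by exhibiting a concrete pair $\alpha, -\alpha$ (or arguing over all $\alpha$) for which $\ell(\alpha) + \ell(-\alpha)$ fails to be constant. Recall that label-symmetry requires $\ell(\alpha) + \ell(-\alpha) = c$ for a fixed constant $c$ across all $\alpha \neq 0$. Thus to refute it, it suffices to compute the sum $\ell(\alpha) + \ell(-\alpha)$ as a function of $\alpha$ and show it genuinely varies; in fact, for these losses the function is strictly monotincreasing or otherwise non-constant, so any two distinct magnitudes witness the failure.

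First I would handle the exponential loss $\ell_{\exp}(\alpha) = e^{-\alpha}$. Here $\ell_{\exp}(\alpha) + \ell_{\exp}(-\alpha) = e^{-\alpha} + e^{\alpha} = 2\cosh(\alpha)$, which depends on $\alpha$ (e.g. it equals $2\cosh(1)$ at $\alpha = 1$ but $2\cosh(2)$ at $\alpha = 2$), so the sum is not constant. Next, for the binary cross-entropy loss written in margin form, $\ell_{\mathrm{BCE}}(\alpha) = \log(1 + e^{-\alpha})$, I would compute $\ell_{\mathrm{BCE}}(\alpha) + \ell_{\mathrm{BCE}}(-\alpha) = \log(1 + e^{-\alpha}) + \log(1 + e^{\alpha}) = \log\!\big((1+e^{-\alpha})(1+e^{\alpha})\big) = \log(2 + e^{\alpha} + e^{-\alpha})$, which again varies with $\alpha$. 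Evaluating at two distinct points (say $\alpha = 1$ and $\alpha = 2$) produces different values, establishing non-symmetry.

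For the ranknet loss I would note that it is precisely the pairwise logistic (cross-entropy) loss on score differences, so the same computation as for BCE applies verbatim to the margin $f(X_1) - f(X_2)$; hence its failure of label-symmetry is inherited. For lambdarank, the cleanest route is to observe that it rescales the ranknet gradients by the NDCG change $|\Delta \text{NDCG}|$ induced by swapping a pair, so the effective per-pair loss carries an additional positive weighting factor that depends on rank positions and is not symmetric under $\alpha \mapsto -\alpha$; consequently $\ell(\alpha) + \ell(-\alpha)$ cannot reduce to a position-independent constant.

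The main obstacle is mild but real: it lies in pinning down precise closed forms for the ranknet and especially lambdarank losses, since lambdarank is classically specified through its gradients rather than an explicit loss, and its $|\Delta \text{NDCG}|$ weighting is rank-dependent rather than a clean function of the margin alone. I would therefore be careful to fix an explicit margin-based convention for each loss (citing \cite{burges2006,burges2010}) before computing, and for lambdarank I would argue at the level of the weighting structure rather than attempt a single-variable identity. Since refuting a universally-quantified equality only requires one counterexample, each case reduces to a short evaluation at two points once the functional form is fixed.
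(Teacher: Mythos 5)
Your proposal is correct and takes essentially the same approach as the paper: compute $\ell(\alpha)+\ell(-\alpha)$ explicitly for the exponential and logistic/BCE losses, observe it varies with $\alpha$ (the paper contrasts the value at $\alpha=1$ with the limit as $\alpha\to\infty$, you compare $\alpha=1$ and $\alpha=2$; either suffices), and reduce ranknet and lambdarank to the pairwise logistic case. Your extra caution about lambdarank's rank-dependent $|\Delta\text{NDCG}|$ weighting is warranted and is in fact acknowledged by the paper in a separate remark noting that lambdarank does not strictly fit the margin-based framework.
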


\begin{proof}

To show the binary cross entropy loss is not label-symmetric:

\begin{align*}
  \ell_{\text{BCE}}(\alpha) + \ell_{\text{BCE}}(-\alpha)
  =\, \log(1+ \exp(-\alpha)) + \log(1+ \exp(\alpha)),
\end{align*}

\noindent which is a finite constant when $\alpha$ is $1$, and goes to $\infty$
as $\alpha$ goes to $\infty$. Recall that the lambdarank loss is
simply a scaled version of the logistic loss (applied in a pairwise
setting) so the same counterexample shows that it is also not
label-symmetric. The ranknet loss is a specific kind of lambdarank loss.

Similarly, for the exponential loss:

\begin{align*}
  \ell_{\exp}(\alpha) + \ell_{\exp}(-\alpha)
  = \,\exp(-\alpha) + \exp(\alpha),
\end{align*}

\noindent which is a finite constant when $\alpha$ is $1$, and goes to $\infty$ as $\alpha$ goes
to $\infty$.
\end{proof}

Label-symmetry of a loss function is not necessary for it to be order
preserving. We can, however, show that some commonly used non-label-symmetric losses are indeed not order preserving.

\begin{proposition}\label{prop:not-ord}
The logistic, exponential and ranknet losses are not order preserving.
\end{proposition}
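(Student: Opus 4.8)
The plan is to disprove order preservation by exhibiting, for each loss, a single distribution over $\mathcal X \times \mathcal Y$ and a single pair $f, g \in \mathcal F$ for which the defining identity \eqref{eq:order-pres} fails at some noise level $\gamma \in (0,1)$. The organizing device is the even function $\phi(\alpha) := \ell(\alpha) + \ell(-\alpha)$: a loss is label-symmetric precisely when $\phi$ is constant, and the preceding propositions have already shown that $\phi$ is non-constant for the logistic, exponential, and ranknet losses. I would show that the failure of \eqref{eq:order-pres} is governed entirely by $\phi$, so that non-constancy of $\phi$ is exactly the lever needed.

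For the pointwise losses (logistic and exponential) I would start from the decomposition used in the proof of the theorem, $\tilde L(f) = \gamma L(f) + (1-\gamma)\,\mathbb E[\ell(-f(X)(2Y-1))]$, and form the order-preservation defect $D(f,g) := \tilde L(f) - \tilde L(g) - (2\gamma-1)(L(f) - L(g))$. A short calculation, using that $\phi$ is even so that $\phi(f(X)(2Y-1)) = \phi(f(X))$ irrespective of $Y$, collapses this to
\[
D(f,g) = (1-\gamma)\bigl(\mathbb E[\phi(f(X))] - \mathbb E[\phi(g(X))]\bigr).
\]
Order preservation demands $D \equiv 0$ for all admissible distributions and all $f,g$, which would force $\mathbb E[\phi(f(X))]$ to be independent of $f$; since $\phi$ is non-constant for both losses, this cannot hold. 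Concretely I would take $X$ uniform on $[0,1]$ (so $f(X)$ and $g(X)$ have densities and ties are vacuous), $Y$ independent of $X$, and $f(x) = x$, $g(x) = 2x$; then $\mathbb E[\phi(f(X))] \neq \mathbb E[\phi(g(X))]$ for $\phi(\alpha) = 2\cosh\alpha$ (exponential) and for $\phi(\alpha) = \log(1+e^{-\alpha}) + \log(1+e^{\alpha})$ (logistic), so $D \neq 0$ for every $\gamma \in (0,1)$.

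For ranknet I would run the analogous argument in the pairwise setting, reusing the breakdown \eqref{eq:breakdown}. Writing $L_{y_1,y_2}(f) = \mathbb E[\ell(f(X_1)-f(X_2)) \mid Y_1 = y_1, Y_2 = y_2]$ and taking $\mathbb P(Y=1) = 1/2$, the noisy risk becomes $\tilde L(f) = \gamma^2 L_{1,0}(f) + (1-\gamma)^2 L_{0,1}(f) + \gamma(1-\gamma)(L_{0,0}(f) + L_{1,1}(f))$, where the four posterior weights over the true label pair come from Bayes' rule given the observed $\tilde Y_1 = 1, \tilde Y_2 = 0$. Using $2\gamma - 1 = \gamma^2 - (1-\gamma)^2$, the defect again reduces to expressions in $\phi$, since $L_{0,1}(f) = \mathbb E[\ell(-(f(X_1)-f(X_2))) \mid Y_1=1,Y_2=0]$ and $L_{y,y}(f) = \tfrac12 \mathbb E[\phi(f(X_1)-f(X_2)) \mid Y_1=Y_2=y]$. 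A convenient counterexample takes the class-conditional feature distributions to be identical (both uniform on $[0,1]$), which makes all four $L_{y_1,y_2}(f)$ equal to $M(f) := \mathbb E[\ell(f(X_1)-f(X_2))]$ and collapses the defect to $2(1-\gamma)(M(f) - M(g))$; choosing $f(x)=x$ and $g(x)=cx$ with $c$ large makes $M(f) \neq M(g)$, so $D \neq 0$.

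The routine parts are the two $\phi$-reductions and the numerical check that the chosen $f, g$ separate the relevant expectations. The main obstacle is the pairwise bookkeeping: correctly computing the four posterior weights $\mathbb P(Y_1=y_1,Y_2=y_2 \mid \tilde Y_1=1, \tilde Y_2=0)$ and handling the same-class terms $L_{0,0}, L_{1,1}$, whose contribution is exactly what label-symmetry would have neutralized. The density hypothesis on $h(X)$ is needed only to exclude ties; it is automatically satisfied by the continuous constructions above, or can be enforced by the smoothing noted in the remark.
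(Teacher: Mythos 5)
Your argument is correct, but it reaches the conclusion by a genuinely different route than the paper. The paper's proof is asymptotic: it takes $f_a(X) = a\,\mathbb P(Y=1\mid X)$ and shows that as $a \to \infty$ the clean risk vanishes while the reversed term $L^-(f_a)$ (respectively $L_{0,1}(f_a)$ in the pairwise case) diverges, so the noisy risk eventually prefers a bounded random scorer to a near-perfect one --- an outright reversal of the order induced by the risks, which is stronger than a mere failure of the identity \eqref{eq:order-pres}. You instead derive the exact relation $\tilde L(f) = (2\gamma-1)L(f) + (1-\gamma)\,\mathbb E[\phi(f(X))]$ with $\phi(\alpha) = \ell(\alpha)+\ell(-\alpha)$, so the order-preservation defect is exactly $(1-\gamma)\bigl(\mathbb E[\phi(f(X))]-\mathbb E[\phi(g(X))]\bigr)$, and then exhibit bounded $f,g$ that separate $\mathbb E[\phi(\cdot)]$; your pairwise variant, with the class-conditional feature laws equal, correctly collapses the four terms of \eqref{eq:breakdown} (with posterior weights $\gamma^2,(1-\gamma)^2,\gamma(1-\gamma),\gamma(1-\gamma)$) to a single quantity $M(\cdot)$ and gives defect $2(1-\gamma)(M(f)-M(g))$. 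Your computation buys something the paper's does not: it shows that, for pointwise losses over the admissible distributions, order preservation forces $\mathbb E[\phi(f(X))]$ to be $f$-independent, i.e., it is essentially a converse to the label-symmetry theorem; it also sidesteps a small looseness in the paper's counterexample (for $f_a = a\,\mathbb P(Y=1\mid X)\ge 0$, the claim $L(f_a)\to 0$ for the logistic loss implicitly requires deterministic labels or a recentered scorer such as $a(2\mathbb P(Y=1\mid X)-1)$). What the paper's asymptotic argument buys in exchange is the qualitatively stronger and more practically alarming conclusion that the noisy risk can actually invert the preference between scorers, not merely distort risk differences non-affinely. Both proofs are complete; your construction also satisfies the density hypothesis in the definition, since $f(X)$ and $g(X)$ are absolutely continuous when $X$ is uniform on $[0,1]$ and $f,g$ are non-constant affine maps.
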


\begin{proof}
For general pointwise losses:

\begin{align*}
  \tilde L(f) = \mathbb E [\ell(f(X)(2\tilde Y - 1))] = \gamma L(f) + (1-\gamma)L^-(f),
\end{align*}

\noindent where $L^-(f) = \mathbb E[\ell(-f(X)(2Y-1))]$. When
$\ell(\alpha) = \log(1+\exp(-\alpha))$ we have the logistic loss. Let
$f_a(X) = a\mathbb P(Y=1 \mid X)$ for some positive $a$. Taking
$a \rightarrow \infty$ drives $L(f_a)$ to $0$ but $L^-(f_a)$ goes to
$\infty$, so the perfect scorer will have a noisy loss that is larger
than a bounded but random scorer. So the logistic loss is not order
preserving. The same argument applies to the exponential loss, given
by $\ell(\alpha) = \exp(-\alpha)$.

For pairwise losses, using the decomposition in (\ref{eq:breakdown}):

\begin{align*}
  \tilde L(f) &= \mathbb E [(f_1- f_2)(2\tilde Y_{12}-1) \mid \tilde Y_{12} \neq 1/2]\\
&= \sum_{(y_1, y_2) \in \{0, 1\}^2} \mathbb P( Y_1 = y_1, Y_2 = y_2 \mid \tilde Y_1 = 1, \tilde Y_2 = 0) L_{y_1, y_2}(f).
\end{align*}

To show that the ranknet loss is not order preserving, consider the
same scorer as before: $f_a(X) = a\mathbb P(Y=1\mid X)$. Recall the
ranknet loss $\ell(\alpha) = \log(1+ \exp(-\alpha))$. Similar to the
pointwise case, while $L_{1, 0}(f_a) \rightarrow 0$ as $a \rightarrow
\infty$,  $L_{0, 1}(f_a)$ is unbounded as $a \rightarrow \infty$ which is eventually larger than the noisy risk for a bounded but
random scoring function.
\end{proof}

As we will see in section \ref{sec:nonconvex}, it is possible to develop
non-convex but smooth and label-symmetric analogs of the logistic and
ranknet losses.

\begin{remark}
  The lambdarank loss does not fit into the framework defined here since it also takes as input the permutation induced by the scores under consideration. However, we expect the lambdarank loss to exhibit similar (lack of) order preservation behavior to the ranknet loss.
\end{remark}

\subsection{Symmetric versions of non-label-symmetric losses}\label{sec:nonconvex}

    It is possible to define symmetric but non-convex versions of the
    cross-entropy and ranknet losses. The non-convexity of a loss,
    although potentially intimidating, does not exclude its viability
    as an objective function. Recent results show that empirical risk
    minimization of non-convex losses can be practical, even with
    classical optimization algorithms such as gradient descent with a fixed step size \cite{mei2016}.

    Given certain conditions on the data distribution and function
    class, Mei et al. show that the set of critical points of the
    empirical risk converges to the set of critical points of the true risk, with local
    minima converging to local minima, saddles to saddles and the
    global minima to the global minima; see their theorem 2 \cite{mei2016}. So, in the case of order preserving losses with class-conditional label noise, the noisy empirical local minima converge to the noisy expected local minima, which are also local minima of the clean loss, and similarly for the global optima.

    While the log-loss is not an order preserving loss, we can define
    a label-symmetric analog.

    \begin{definition}[symmetrized losses]

      The \textit{symmetrized-logistic loss} is given by $\ell(\alpha) = 1-\sigma(\alpha) := 1-(1+\exp(-\alpha))^{-1}$ which defines a proper scoring rule, $1-\sigma((2Y-1)f(X))$ \cite{gneiting}. The symmetric equivalent of the ranknet loss is identical, with scoring rule: $1-\sigma((2Y_{ij}-1)(f_i-f_j))$.
    \end{definition}

    \begin{proposition}
      The symmetrized logistic and the symmetrized ranknet losses are
      label-symmetric loss functions and are thus order preserving.
    \end{proposition}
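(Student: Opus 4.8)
The plan is to verify the label-symmetry condition $\ell(\alpha) + \ell(-\alpha) = c$ directly for the symmetrized-logistic loss and then invoke the theorem already established, that every label-symmetric loss is order preserving. The entire argument rests on the elementary point-symmetry of the sigmoid about $(0, 1/2)$, namely $\sigma(\alpha) + \sigma(-\alpha) = 1$.

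First I would establish this sigmoid identity. Writing $\sigma(-\alpha) = (1+\exp(\alpha))^{-1}$ and multiplying numerator and denominator by $\exp(-\alpha)$ gives $\sigma(-\alpha) = \exp(-\alpha)/(1+\exp(-\alpha))$, so that $\sigma(\alpha) + \sigma(-\alpha) = (1 + \exp(-\alpha))/(1+\exp(-\alpha)) = 1$ for every $\alpha$.

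Next, for the symmetrized-logistic loss $\ell(\alpha) = 1 - \sigma(\alpha)$, this immediately yields
$$\ell(\alpha) + \ell(-\alpha) = 2 - (\sigma(\alpha) + \sigma(-\alpha)) = 1,$$
so the loss is label-symmetric with constant $c = 1$. Since the symmetrized ranknet loss is defined by the same functional form $1 - \sigma(\cdot)$ applied to the pairwise margin $(2Y_{ij}-1)(f_i - f_j)$, the identical computation shows it too is label-symmetric with $c = 1$. Order preservation of both losses then follows directly from the earlier theorem.

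I do not anticipate a genuine obstacle here; the computation is routine. The only points worth care are confirming that the constant $c$ is genuinely independent of $\alpha$ (which the sigmoid identity guarantees for all $\alpha \neq 0$) and observing that label-symmetry is a purely one-dimensional condition on the univariate function $\ell$, so that the pairwise case, sharing the same margin representation as the pointwise case, transfers verbatim and requires no separate argument.
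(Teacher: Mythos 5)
Your proof is correct and follows exactly the paper's argument: verify $\ell(\alpha)+\ell(-\alpha)=2-(\sigma(\alpha)+\sigma(-\alpha))=1$ and invoke the theorem that label-symmetric losses are order preserving. You simply spell out the sigmoid identity in more detail than the paper does.
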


    \begin{proof}
      For both the symmetrized logistic and the symmetrized ranknet losses:

    $$
    \ell(\alpha) + \ell(-\alpha) = 1-\sigma(\alpha) + 1 - \sigma(-\alpha) = 1.
    $$
    So they are label-symmetric and hence order-preserving.
    \end{proof}

    Our experimental results show that the order preserving property
    of these losses justifies the additional complexity resulting from
    the non-convexity.

\subsection{DCG is order-preserving}

Since DCG, NDCG and other listwise ranking losses are computed over a
collection of instances (not individuals or simply pairs), they do not
fit into the framework of label-symmetric functions defined
previously. Recall that although DCG and NDCG losses satisfy the
definition proposed in \cite{zamani2018ictir}, their results do
not apply unless additional assumptions are made. Nevertheless, DCG losses still exhibit
the same order-preservation property introduced in section \ref{sec:label-sym}.

    \begin{proposition}\label{prop:dcg-ord}
      $\text{DCG@k}$ losses are order-preserving.
    \end{proposition}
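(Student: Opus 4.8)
The plan is to show that the expected noisy $\text{DCG@k}$ is an affine function of the expected clean $\text{DCG@k}$, where the slope is $(2\gamma-1)$ and the intercept is a constant shared by every scoring function. Once this is established, differencing the risks of $f$ and $g$ cancels the intercept and produces exactly the order-preserving identity (\ref{eq:order-pres}). The whole argument rests on two facts: $\text{DCG@k}$ is \emph{linear} in the labels, and the permutation it uses depends only on the features, not on the labels or the noise.

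First I would fix a query $q$ and condition on the features $X$ and true labels $Y$ of its documents. The permutation $\pi_f$ is determined entirely by the scores $\{f(X_j)\}$, hence only by the features; by the density assumption in the definition of order-preservation, ties among the scores occur with probability zero, so $\pi_f$ is almost surely well-defined and is independent of the label noise given $X$. Using the class-conditional noise model, the conditional mean of each noisy label is
$$
\mathbb E[\tilde Y_j \mid Y_j] = \gamma Y_j + (1-\gamma)(1-Y_j) = (2\gamma-1)Y_j + (1-\gamma).
$$
Because $\pi_f^{-1}(i)$ is a function of the features only and is independent of the noise, I can push the expectation through the sum defining the noisy $\text{DCG@k}$, namely $-\sum_i^k \tilde Y^q_{\pi_f^{-1}(i)}/D_i$, to obtain
$$
\mathbb E\!\left[-\sum_i^k \frac{\tilde Y^q_{\pi_f^{-1}(i)}}{D_i} \;\middle|\; X, Y\right]
= -\sum_i^k \frac{(2\gamma-1)Y^q_{\pi_f^{-1}(i)} + (1-\gamma)}{D_i}
= (2\gamma-1)\,\text{DCG@k}(f;q) - (1-\gamma)\sum_i^k \frac{1}{D_i}.
$$

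The crucial observation, and the real content of the proposition, is that the intercept $-(1-\gamma)\sum_i^k D_i^{-1}$ does \emph{not} depend on $f$: whatever documents $f$ chooses to place in the top $k$ positions, the discounts being summed are always $D_1, \dots, D_k$, since $\pi_f$ merely permutes which document occupies each of these fixed ranks. Thus $f \mapsto \mathbb E[\,\cdot\mid X,Y]$ is an affine transform of $f \mapsto \text{DCG@k}(f;q)$ with a common slope and a common, $f$-independent constant. Taking the remaining expectation over the documents within a query and over the query distribution $\nu_q$ leaves the constant unchanged, giving $\tilde L^{\text{DCG@k}}(f) = (2\gamma-1)L^{\text{DCG@k}}(f) + C$ for some $C$ independent of $f$; subtracting the identical expression for $g$ eliminates $C$ and yields (\ref{eq:order-pres}). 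The only delicate step is the well-definedness of $\pi_f$ and its independence from the noise, which the density assumption secures; everything else is linearity of expectation and the rank-invariance of $\sum_i^k D_i^{-1}$.
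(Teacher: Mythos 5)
Your proposal is correct and follows essentially the same route as the paper: both exploit the linearity of $\text{DCG@k}$ in the labels together with the fact that $\pi_f$ depends only on the scores, to show that the expected noisy DCG is an affine transform of the clean one with slope $(2\gamma-1)$ and an $f$-independent intercept $(1-\gamma)\sum_i^k D_i^{-1}$, after which differencing cancels the intercept. Your version is if anything slightly more explicit about the independence of $\pi_f$ from the noise and about the sign conventions, but the substance is identical.
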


    \begin{proof}

    Expanding the noisy DCG loss:

    \begin{align*}
      \widetilde{\text{DCG@k}}(f;q) &= \sum_i^k\frac{\varepsilon^q_{\pi^{-1}_f(i)} Y^q_{\pi^{-1}_f(i)} + \left(1-\varepsilon^q_{\pi^{-1}_f(i)}\right ) \left(1-Y^q_{\pi^{-1}_f(i)}\right)}{D_i} \\
      &= \sum_i^k\left(2\varepsilon^q_{\pi^{-1}_f(i)} -1\right)\frac{Y^q_{\pi^{-1}_f(i)}}{D_i} + \sum_i^k\frac{1-\varepsilon^q_{\pi^{-1}_f(i)}}{D_i}.
    \end{align*}

    Taking the expectation we have:

    $$
    \mathbb E\left[\widetilde{\text{DCG@k}}(f;q)\right] = (2\gamma - 1) \mathbb E \left [\text{DCG@k}(f;q)\right ] + \sum_i^k\frac{1-\gamma}{D_i}.
    $$

    This is simply an affine transform of $\mathbb E \left [\text{DCG@k}(f;q)\right ]$, so DCG losses are
    order preserving.
    \end{proof}

    While both AUC and DCG losses are order preserving, AUC has a further property that makes it suitable as a metric in practice. In the proof of proposition \ref{prop:dcg-ord}, we consider the expectation over the query distribution. In practice, the same query is issued many times in order to form a high quality estimate of the DCG loss, appealing to the law of large numbers. If each query is issued only a single time, estimates of the DCG could be dominated by noise. In contrast, the samples used in the estimation of AUC are exchangeable, so that estimates of the query-wise AUC are close to the true AUC given a \textit{single} issue of a query and enough labels for that query.
    
\subsection{Finite sample results}\label{sec:conv}
    
The results presented so far deal primarily with the expected risk, rather than the empirical risk. In particular, the order-preservation property is a property of the expected risk, so the relationship in (\ref{eq:order-pres}) will not necessarily hold for all finite $n$. To formulate analogous finite sample results, we can leverage concentration inequalities, with the goal of showing that order preservation holds with high probability for most pairs under consideration.
    
    However, in practice, we are typically interested in the simpler
    task of identifying a single scoring function that achieves the optimal risk. The following results show that ERM over the noisy risk enjoys the same classical convergence guarantees as the clean case. In particular, minimization of the empirical risk over a class of finite VC dimension is consistent, achieving the same convergence rate as in the clean problem, scaled by a factor related to the noise magnitude.
    
    \begin{lemma}\label{lem:scale}
      Let $\ell$ be an order-preserving loss. If $\gamma \in (0.5, 1]$ then we have the following equality for each $\epsilon > 0$:

      \begin{align}\label{eq:scale}
        \mathbb P \left( L^\ell(f_n) - \inf_{f \in \mathcal F} L^\ell(f) > \epsilon \right) = \mathbb P \left( \tilde L^\ell(f_n) - \inf_{f \in \mathcal F} \tilde L^\ell(f) > \epsilon  (2\gamma -1) \right),
      \end{align}
\noindent    where $f_n$ is the function selected by empirical risk minimization with $n$ samples.
    \end{lemma}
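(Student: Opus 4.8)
The plan is to show that the two events whose probabilities appear in (\ref{eq:scale}) are in fact the \emph{same} event, outcome by outcome, so that equality of the probabilities is immediate. All the randomness lives in the sample, which determines the ERM selection $f_n$; once a realization of the sample is fixed, $f_n$ is a fixed element of $\mathcal F$ and the claim reduces to a purely deterministic relationship between the clean and noisy \emph{population} risks evaluated at that element. In particular, I would note that the argument never uses \emph{how} $f_n$ is chosen, so the result holds for any (measurable) data-dependent selection rule.

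The first step is to extract an affine relationship from the order-preserving property. Fixing an arbitrary reference $g_0 \in \mathcal F$ and applying (\ref{eq:order-pres}) with $g = g_0$ gives, for every $f \in \mathcal F$,
$$\tilde L^\ell(f) = (2\gamma - 1)\,L^\ell(f) + b, \qquad b := \tilde L^\ell(g_0) - (2\gamma-1)\,L^\ell(g_0),$$
where the constant $b$ does not depend on $f$. Thus $\tilde L^\ell$ is the image of $L^\ell$ under the affine map $x \mapsto (2\gamma-1)x + b$.

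Next I would invoke $\gamma \in (0.5,1]$, so that $2\gamma - 1 > 0$ and this affine map is strictly increasing and continuous. An increasing continuous map commutes with the infimum, so $\inf_{f\in\mathcal F} \tilde L^\ell(f) = (2\gamma-1)\inf_{f\in\mathcal F} L^\ell(f) + b$, whether or not the infimum is attained. Subtracting, the constant $b$ cancels and
$$\tilde L^\ell(f_n) - \inf_{f\in\mathcal F}\tilde L^\ell(f) = (2\gamma-1)\Bigl(L^\ell(f_n) - \inf_{f\in\mathcal F}L^\ell(f)\Bigr).$$
Dividing by the positive constant $2\gamma-1$ shows that $\tilde L^\ell(f_n) - \inf_f \tilde L^\ell(f) > \epsilon(2\gamma-1)$ holds if and only if $L^\ell(f_n) - \inf_f L^\ell(f) > \epsilon$. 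Hence the two events coincide as sets and (\ref{eq:scale}) follows.

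The only real subtlety, and the step I would be most careful about, is the sign of the slope. Whether the affine map preserves or reverses order is governed entirely by the sign of $2\gamma-1$: the hypothesis $\gamma > 0.5$ is exactly what keeps the map increasing, so that infima transport to infima rather than to suprema and dividing by $2\gamma-1$ preserves the direction of the strict inequality. For $\gamma < 0.5$ the noisy risk would invert the clean ordering and the statement would fail, while the excluded boundary $\gamma = 0.5$ collapses $\tilde L^\ell$ to the constant $b$ and makes the noisy risk uninformative.
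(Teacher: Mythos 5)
Your proof is correct and is exactly the argument the paper leaves implicit --- the paper's own proof is the one-line remark that the claim is ``immediate from the definition of order preserving losses,'' and your write-up simply supplies the details: the order-preserving identity gives an affine relation $\tilde L^\ell(f) = (2\gamma-1)L^\ell(f) + b$ with $2\gamma - 1 > 0$, so excess risks rescale by $2\gamma-1$ and the two events coincide as sets. Your added observations (that the argument works for any data-dependent selection rule, and that $\gamma \le 0.5$ is exactly where it breaks) are correct and consistent with the paper's discussion.
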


    \begin{proof}
    The proof is immediate from the definition of order preserving losses.
    \end{proof}

    \begin{corollary}
      Assume that for the function class $\mathcal F$ and a loss $\ell$ we have the following uniform law of large numbers:

      \begin{align*}
        \sup_{f \in \mathcal F} \left|  L_n^\ell(f) - L^\ell(f)\right|
        \overset{p}{\rightarrow} 0.
      \end{align*}

      Then running empirical risk
      minimization on the noisy estimates of the risk is equivalent
      (up to constant scaling factors of the convergence rate) to running ERM using clean estimates.
    \end{corollary}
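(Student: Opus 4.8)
The plan is to chain a uniform deviation bound for the \emph{noisy} empirical risk together with the order-preservation identity of Lemma~\ref{lem:scale}. The first thing I would do is transfer the assumed uniform law of large numbers from the clean to the noisy distribution. The noisy sample $(X_i,\tilde Y_i)$ is an iid draw from a fixed distribution $\tilde{\mathbb P}$ on $\mathcal X\times\mathcal Y$, and the hypothesized ULLN is really a statement about the complexity of the loss-composed class $\{(x,y)\mapsto \ell(f(x)(2y-1)) : f\in\mathcal F\}$ (its VC/subgraph dimension, covering numbers, or Rademacher complexity), a quantity that does not depend on the law of the labels. Hence the same empirical-process argument that gives $\sup_{f}|L_n^\ell(f)-L^\ell(f)|\overset{p}{\to}0$ also gives $\sup_{f}|\tilde L_n^\ell(f)-\tilde L^\ell(f)|\overset{p}{\to}0$ at the same rate, since the noisy data are merely iid samples from another law over the same space.

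Next I would apply the textbook ERM reduction. Writing $f_n$ for the minimizer of the noisy empirical risk $\tilde L_n^\ell$ and $f^\star$ for a near-minimizer of $\tilde L^\ell$, the excess noisy risk $\tilde L^\ell(f_n)-\tilde L^\ell(f^\star)$ telescopes into $[\tilde L^\ell(f_n)-\tilde L_n^\ell(f_n)] + [\tilde L_n^\ell(f_n)-\tilde L_n^\ell(f^\star)] + [\tilde L_n^\ell(f^\star)-\tilde L^\ell(f^\star)]$, where the middle bracket is nonpositive by optimality of $f_n$ and each outer bracket is at most $\sup_f|\tilde L_n^\ell(f)-\tilde L^\ell(f)|$. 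Thus $\tilde L^\ell(f_n)-\inf_f\tilde L^\ell(f)\le 2\sup_f|\tilde L_n^\ell(f)-\tilde L^\ell(f)|\overset{p}{\to}0$, at the noisy ULLN rate.

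Finally I would invoke Lemma~\ref{lem:scale}. Since $\gamma\in(0.5,1]$ we have $2\gamma-1>0$, so the identity there rescales the noisy excess risk into the clean one: $L^\ell(f_n)-\inf_f L^\ell(f)=(\tilde L^\ell(f_n)-\inf_f\tilde L^\ell(f))/(2\gamma-1)\overset{p}{\to}0$. Tracking constants, forcing the clean excess risk below $\epsilon$ amounts to controlling the noisy uniform deviation at scale $\epsilon(2\gamma-1)/2$, i.e.\ the clean rate slowed only by the fixed factor $1/(2\gamma-1)$ that depends on the noise level---exactly the claimed equivalence up to constant scaling.

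The step I expect to be the crux is the first one: justifying that the ULLN is \emph{inherited} by the noisy law rather than assumed anew, since the corollary hypothesizes concentration only for the clean risk. I would make this airtight by reading the hypothesis as a distribution-free complexity condition (e.g.\ finite VC dimension of the relevant subgraph class, or a Glivenko--Cantelli property), so that the underlying bound holds simultaneously over all distributions on $\mathcal X\times\mathcal Y$, the noisy one included; the remaining steps are then routine.
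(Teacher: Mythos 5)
Your proposal is correct and follows essentially the same route as the paper, which simply cites Lemma~\ref{lem:scale} together with classical uniform-convergence results and notes that the deviation $\epsilon$ is rescaled by $2\gamma-1$. Your additional care in step one --- observing that the hypothesized ULLN must be read as a distribution-free complexity condition so that it transfers to the noisy law $\tilde{\mathbb P}$ --- fills in a point the paper leaves implicit, and the rest (the telescoping ERM bound and division by $2\gamma-1$) is exactly the intended argument.
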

    
    The above corollary is a direct consequence of classical results
    from statistical learning theory
    \cite{devroye1997,wainright2019} and lemma \ref{lem:scale}. Note that the deviation
    $\epsilon$ on the RHS of (\ref{eq:scale}) is simply scaled by a term related to the
    magnitude of the noise, so the convergence rate is accordingly
    scaled. For example, for the 0-1 loss, by applying Devroye et al.'s theorem 12.6
    \cite{devroye1997} to the RHS of (\ref{eq:scale}) we can
    bound the excess risk of the model selected using ERM, with high probability:

    \begin{theorem}\label{thm:dev}
      For each $\epsilon > 0$ and number of samples $n$:
\begin{align}\label{eq:dev}
      \mathbb P \left ( L(f_n) - \inf_{f\in\mathcal F}L(f) > \epsilon \right ) \leq 8 \mathcal S(\mathcal F, n) e^{-n\epsilon^2(2\gamma - 1)^2/128},
    \end{align}

\noindent    where $\mathcal S(\mathcal F, n)$ gives the $n^{th}$ shatter
    coefficient of the function class $\mathcal F$.

  \end{theorem}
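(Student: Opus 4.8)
The plan is to obtain the bound by simply chaining Lemma \ref{lem:scale} with a standard Vapnik-Chervonenkis generalization bound for the 0-1 loss, applied to the \emph{noisy} learning problem rather than the clean one. The conceptual point is that the corruption model leaves the samples $(X_i, \tilde Y_i)$ iid, and $f_n$ is by construction the minimizer of the noisy empirical risk $\tilde L_n$, so classical ERM theory (Devroye et al.'s Theorem 12.6 \cite{devroye1997}) applies to the noisy problem verbatim, with no modification. Lemma \ref{lem:scale} is then the only bridge needed: it transports a guarantee on the noisy excess risk into the desired guarantee on the clean excess risk, at the cost of rescaling the deviation threshold by $(2\gamma-1)$.

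Concretely, I would proceed in three steps. First, since the 0-1 loss is label-symmetric and hence order-preserving, and since $\gamma \in (0.5, 1]$, Lemma \ref{lem:scale} gives, for each $\epsilon > 0$,
$$\mathbb P\left(L(f_n) - \inf_{f\in\mathcal F} L(f) > \epsilon\right) = \mathbb P\left(\tilde L(f_n) - \inf_{f\in\mathcal F}\tilde L(f) > \epsilon(2\gamma-1)\right).$$
Second, I would apply Devroye et al.'s Theorem 12.6 to the right-hand side. That result bounds the ERM excess risk by twice the uniform deviation $\sup_f |L_n - L|$ and then invokes the VC inequality; applied to the noisy problem with threshold $\delta = \epsilon(2\gamma-1)$ it yields
$$\mathbb P\left(\tilde L(f_n) - \inf_{f\in\mathcal F}\tilde L(f) > \delta\right) \leq 8\,\mathcal S(\mathcal F, n)\,e^{-n\delta^2/128}.$$
Third, I would substitute $\delta = \epsilon(2\gamma-1)$ and combine with the first display, which reproduces (\ref{eq:dev}) directly.

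The main obstacle is essentially bookkeeping rather than mathematics: one must track where the constant $128$ comes from, namely $128 = 4 \cdot 32$, where $32$ is the exponent constant in the VC inequality and the factor $4$ is the square of the $1/2$ introduced when the excess risk is bounded by $2\sup_f|L_n - L|$; squaring the extra $(2\gamma-1)$ scaling from Lemma \ref{lem:scale} then produces the $(2\gamma-1)^2$ in the exponent. Two points deserve a brief remark in passing: the shatter coefficient $\mathcal S(\mathcal F, n)$ should be read as that of the classifier class $\{\operatorname{sign} \circ f : f \in \mathcal F\}$ induced by the scoring functions under the 0-1 loss, and the appeal to Theorem 12.6 on the noisy problem is legitimate precisely because the class-conditional noise preserves the iid structure of the training sample.
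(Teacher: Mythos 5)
Your proposal is correct and follows exactly the paper's route: Lemma \ref{lem:scale} converts the clean excess-risk event into the noisy one at threshold $\epsilon(2\gamma-1)$, and Devroye et al.'s Theorem 12.6 applied to the noisy ERM problem gives the VC bound, which after substitution yields (\ref{eq:dev}). Your side remarks on the provenance of the constant $128$, the induced classifier class for the shatter coefficient, and the preserved iid structure are accurate elaborations of details the paper leaves implicit.
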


  Theorem \ref{thm:dev} shows that if the function
  class is not too complex (in terms of the growth rate of it's
  shatter coefficient), then ERM is a consistent procedure. In
  particular, the risk of the selected scoring function,
  $f_n$, converges to the optimal risk achievable by any function in the class, $\mathcal F$.

     Inverting the
    deviation bound gives us a bound on the difference between the
    expected risk and the optimal risk when we only have access to
    noisy data:
    
    $$
    \mathbb E [L(f_n)] - \inf_{f \in \mathcal F} L(f) \leq 16 \sqrt{\frac{\log(8e\mathcal S(\mathcal F, n))}{2n(2\gamma - 1)^2}},
    $$
    
    \noindent where the expectation is taken over the randomness in the dataset
    used for empirical risk minimization. Notice that these bounds are
    non-asymptotic and are valid for \textit{every} value of $n$. Similar bounds can be
    derived for other order-preserving losses from their classical analogs,
    see \cite{agarwal2005} and \cite{gyorfi2002}.

    \subsubsection{Noise tolerance of almost-optima}

    We can also develop similar deviation bounds in the case where
    empirical risk minimization does not succeed. Since an empirical
    minimizer may be computationally difficult to obtain in practice,
    we now assume that our noisy empirical risk is only
    \textit{almost-minimized}, in the sense that our selected function
    is close to the optimum with high probability. More formally,
    assume that given $n$ samples, the noisy excess empirical risk is
    bounded above by $\epsilon_n$ with probability as least
    $1-\delta_n$ for some constants $\epsilon_n, \delta_n > 0$:

    \begin{align}\label{eq:approx-min}
      \mathbb P \left ( \tilde L_n(f_n) - \inf_{f \in \mathcal F}
      \tilde L_n(f) \leq \epsilon_n \right) \geq 1-\delta_n.
    \end{align}

    Then we have the following non-asymptotic result:

    \begin{theorem}\label{thm:erm-approx-min}
      Let $\ell$ be an order-preserving loss. Assume that we almost-minimize the noisy empirical risk,
      as in (\ref{eq:approx-min}). Then, for each $n, \epsilon > 0$ we have that:
      \begin{align*}
        &\mathbb P \left ( L(f_n) - \inf_{f \in \mathcal F} L(f) >
        \epsilon \right) \\
        & \leq \delta_n +
        \mathbb P \left( 2 \sup_{f \in \mathcal F} \mid \tilde L_n(f)
        - \tilde L(f) \mid > \epsilon(2\gamma -1) - \epsilon_n \right),
      \end{align*}

      \noindent where, as before, $f_n$ is the function selected by minimization
      of the noisy empirical risk.
    \end{theorem}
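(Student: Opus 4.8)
The plan is to reduce the clean statement to a purely noisy one using the order-preserving property, and then to run the textbook excess-risk decomposition entirely on the noisy (observable) side. First I would record the pointwise event identity
$$\left\{ L(f_n) - \inf_{f \in \mathcal F} L(f) > \epsilon \right\} = \left\{ \tilde L(f_n) - \inf_{f \in \mathcal F} \tilde L(f) > (2\gamma-1)\epsilon \right\},$$
exactly as in Lemma \ref{lem:scale}. This follows directly from order preservation: since $\gamma \in (0.5,1]$ gives $2\gamma - 1 > 0$, the map from clean to noisy risk is an increasing affine transform, so it preserves both the infimum over $\mathcal F$ and the direction of the strict inequality. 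The identity is pointwise in the data, so it applies to the almost-minimizer $f_n$ considered here and not only to the exact ERM solution. This converts the target probability into $\mathbb P\!\left( \tilde L(f_n) - \inf_{f \in \mathcal F}\tilde L(f) > (2\gamma-1)\epsilon \right)$.

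Next I would bound the noisy expected excess risk. Writing $\tilde L^\ast = \inf_{f \in \mathcal F}\tilde L(f)$ and taking an approximating minimizer $g$ with $\tilde L(g) \le \tilde L^\ast + \eta$, I would use the standard three-term split
$$\tilde L(f_n) - \tilde L(g) = \bigl[\tilde L(f_n) - \tilde L_n(f_n)\bigr] + \bigl[\tilde L_n(f_n) - \tilde L_n(g)\bigr] + \bigl[\tilde L_n(g) - \tilde L(g)\bigr].$$
The outer two brackets are each at most $\sup_{f \in \mathcal F}\mid \tilde L_n(f) - \tilde L(f)\mid$ in absolute value, and the middle bracket is bounded above by $\tilde L_n(f_n) - \inf_{f \in \mathcal F}\tilde L_n(f)$, which the almost-minimization hypothesis (\ref{eq:approx-min}) controls by $\epsilon_n$ on an event $A$ with $\mathbb P(A) \ge 1-\delta_n$. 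Letting $\eta \downarrow 0$ then yields, on the event $A$,
$$\tilde L(f_n) - \tilde L^\ast \le 2\sup_{f \in \mathcal F}\mid \tilde L_n(f) - \tilde L(f)\mid + \epsilon_n.$$

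Finally I would assemble the pieces with a single union bound. On $A$, the event $\{\tilde L(f_n) - \tilde L^\ast > (2\gamma-1)\epsilon\}$ forces $2\sup_{f \in \mathcal F}\mid \tilde L_n(f) - \tilde L(f)\mid > (2\gamma-1)\epsilon - \epsilon_n$, while the complementary contribution from $A^c$ is at most $\delta_n$; combining this with the Step~1 event identity gives precisely the claimed inequality. I expect the only genuine subtleties to be administrative rather than substantive: the handling of a possibly unattained infimum $\inf_{f \in \mathcal F}\tilde L(f)$, which the $\eta$-approximation argument dispatches, and careful bookkeeping of the inequality direction when passing through the positive factor $2\gamma-1$. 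Notably, no concentration or symmetrization is carried out here, since the uniform deviation $\sup_{f \in \mathcal F}\mid \tilde L_n(f) - \tilde L(f)\mid$ is deliberately left as an abstract quantity to be controlled separately (for instance by a uniform law of large numbers or a shatter-coefficient bound, as in Theorem \ref{thm:dev}).
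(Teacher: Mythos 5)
Your proposal is correct and follows essentially the same route as the paper: reduce to the noisy excess risk via order preservation, apply the standard three-term decomposition bounded by $2\sup_{f\in\mathcal F}\lvert \tilde L_n(f)-\tilde L(f)\rvert + \epsilon_n$ on the almost-minimization event, and finish with a union bound. Your use of an explicit $\eta$-approximate minimizer $g$ is a minor cosmetic variation on the paper's direct bound on $\inf_f \tilde L_n(f) - \inf_f \tilde L(f)$, yielding the identical inequality.
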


    \begin{proof}
      First, since $\ell$ is order preserving, we have that:
      $$L(f_n) - \inf_{f \in \mathcal F} L(f) = (\tilde L(f_n) - \inf_{f \in
        \mathcal F}\tilde L(f))/(2\gamma - 1).$$
      Now, rewrite $\tilde L(f_n)
      - \inf_{f \in \mathcal F} \tilde L(f)$ as:

      \begin{align*}
        \tilde L(f_n)-\tilde L_n(f_n) + \tilde L_n(f_n) -\inf_{f \in
        \mathcal F} \tilde L_n(f) +
        \inf_{f \in \mathcal F} \tilde L_n(f) - \inf_{f \in \mathcal F}
        \tilde L(f),
      \end{align*}

      \noindent which with probability at least $1-\delta_n$ is bounded above by:
      \begin{align*}
        \leq 2\sup_{f \in \mathcal F} \mid \tilde L_n(f) - \tilde L(f) \mid + \,\epsilon_n.
      \end{align*}

      So, by a union bound, we have the claimed inequality.
    \end{proof}

    Now, as before, applying classical results from statistical
    learning theory, we can derive deviation bounds for symmetric loss
    functions when the risk is only almost-minimized. For
    example, in the case of the 0-1 loss we have the following result:

    \begin{theorem}
      Let $\ell$ be the 0-1 loss. Assume that we almost-minimize the noisy empirical risk,
      as in (\ref{eq:approx-min}). Then, for each $n, \epsilon > 0$ we have that:
      \begin{align*}
        \mathbb P \left ( L(f_n) - \inf_{f \in \mathcal F} L(f) >
        \epsilon \right)
        \leq \delta_n + 8 \mathcal S(\mathcal F, n) e^{-n(\epsilon(2\gamma - 1)-\epsilon_n)^2/128}.
      \end{align*}
    \end{theorem}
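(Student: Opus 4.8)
The plan is to combine Theorem~\ref{thm:erm-approx-min} with the classical Vapnik--Chervonenkis uniform deviation inequality, in exactly the way that Theorem~\ref{thm:dev} combined Lemma~\ref{lem:scale} with Devroye et al.'s deviation bound. First I would record that the $0$-$1$ loss is label-symmetric and is therefore order-preserving (Section~\ref{sec:label-sym}), so that Theorem~\ref{thm:erm-approx-min} applies without modification. This immediately gives
\begin{align*}
  \mathbb P \left( L(f_n) - \inf_{f\in\mathcal F} L(f) > \epsilon \right)
  \leq \delta_n + \mathbb P \left( 2\sup_{f\in\mathcal F} \mid \tilde L_n(f) - \tilde L(f)\mid > \epsilon(2\gamma-1)-\epsilon_n \right),
\end{align*}
and it remains only to control the second term.

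The key observation is that for the $0$-$1$ loss $\tilde L_n(f)$ is the empirical frequency of misclassifications of the \emph{noisy} labels and $\tilde L(f)$ is the corresponding probability under the noisy joint law of $(X,\tilde Y)$. Uniform deviations of such empirical frequencies over $\mathcal F$ are governed by the shatter coefficient of the family of error regions induced by $\mathcal F$. Since the shatter coefficient is a purely combinatorial quantity depending only on $\mathcal F$ and not on the underlying distribution, corrupting the labels leaves it unchanged; it is still $\mathcal S(\mathcal F, n)$. I can therefore apply the Vapnik--Chervonenkis inequality (Devroye et al., theorem 12.5 \cite{devroye1997}) to the noisy distribution, obtaining
\begin{align*}
  \mathbb P \left( \sup_{f\in\mathcal F} \mid \tilde L_n(f) - \tilde L(f)\mid > t \right) \leq 8\, \mathcal S(\mathcal F, n)\, e^{-nt^2/32}.
\end{align*}

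Finally I would set $t = (\epsilon(2\gamma-1)-\epsilon_n)/2$, so that the event appearing in the bound from Theorem~\ref{thm:erm-approx-min} coincides with $\{\sup_f \mid \tilde L_n(f) - \tilde L(f)\mid > t\}$. The factor of two carried inside the supremum is precisely what converts the constant $32$ into $128$: substituting gives $t^2/32 = (\epsilon(2\gamma-1)-\epsilon_n)^2/128$, and plugging this into the inequality from Theorem~\ref{thm:erm-approx-min} yields the claimed bound. There is no serious obstacle here; once the deviation inequality is in hand the argument is routine. The only points requiring care are the bookkeeping of constants (the factor-of-two-to-$128$ step) and noting that, as with the classical bound, the inequality is informative only in the regime $\epsilon(2\gamma-1) > \epsilon_n$ where $t > 0$, and is otherwise vacuous.
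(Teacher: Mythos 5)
Your proposal is correct and follows essentially the same route as the paper, which simply states that the result is immediate from Theorem~\ref{thm:erm-approx-min} together with Devroye et al.'s VC deviation bound. You merely unfold that one-line argument, invoking the supremum form of the VC inequality (their theorem 12.5, with constant $32$) and tracking the factor of two to recover the constant $128$, which is exactly how the cited theorem 12.6 is itself obtained.
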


    \begin{proof}
      The result is immediate from theorem \ref{thm:erm-approx-min} and Devroye et al. theorem 12.6 \cite{devroye1997}.
    \end{proof}
    
\section{Experimental results}

To better understand the behavior of order preserving and non-order preserving losses, we conduct two sets of experiments. The first verifies the order preservation properties that were demonstrated in section \ref{sec:order}. The second investigates the practical implications of these theoretical results.

\subsection{Simulating order preservation}

We explore the order preservation properties of commonly used loss
functions, as well as the symmetrized losses introduced in section
\ref{sec:nonconvex}. We simulate the behavior of 100 scoring functions of
varying quality over 1000 draws of 100 queries and compare the noisy risk to the
clean risk. The scoring functions are created by adding an increasing amount of noise to a perfect scorer. Further, the predictions of half of the scorers constructed in this way are scaled by $10$, to produce the effect of the counterexamples presented in proposition \ref{prop:not-ord}.
The prevalence across simulated queries varies over a range from 0.1 to 0.9 while the noise level is kept constant, $1-\gamma = 0.1$. We plot the noisy risk vs the clean risk for each loss function in figure \ref{fig:conv}.

We expect the plots corresponding to order preserving losses to have
all their points lying along a line with positive slope, showing that
the noisy risk is an affine transform of the clean loss. Figure
\ref{fig:conv} supports the findings in this work, showing that AUC is
order preserving. Interestingly, MAP does not appear to be exactly order
preserving, although the plotted points show a high rank correlation. In contrast, as demonstrated by proposition
\ref{prop:not-ord}, the logistic and exponential losses are not order
preserving; note that not all points lie along the same line.

\begin{figure*}[h]
     \centering
     \begin{subfigure}[b]{0.3\textwidth}
         \centering
         \includegraphics[width=\textwidth]{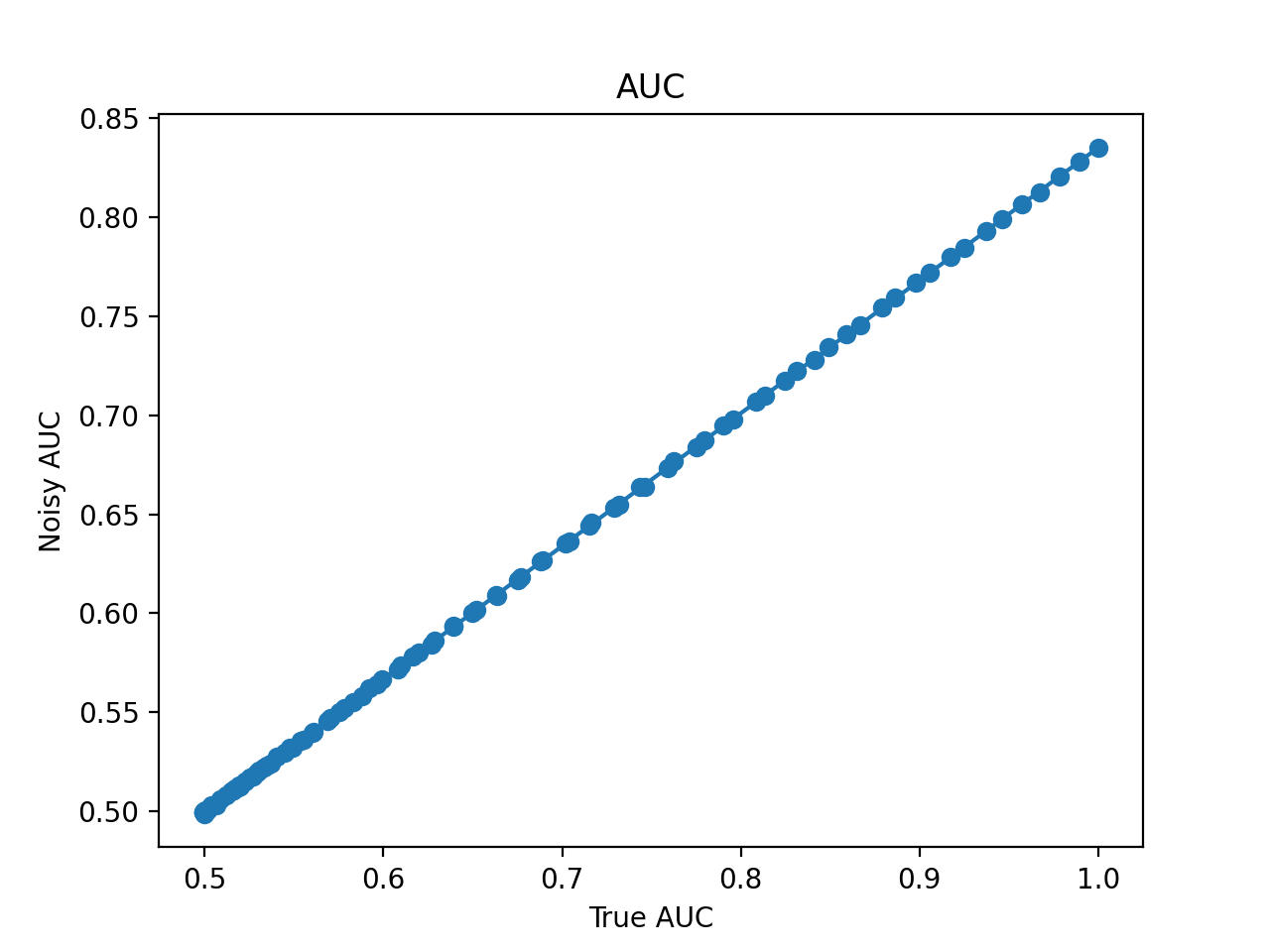}
         \caption{}
         \label{fig:auc}
     \end{subfigure}
     \hfill
     \begin{subfigure}[b]{0.3\textwidth}
         \centering
         \includegraphics[width=\textwidth]{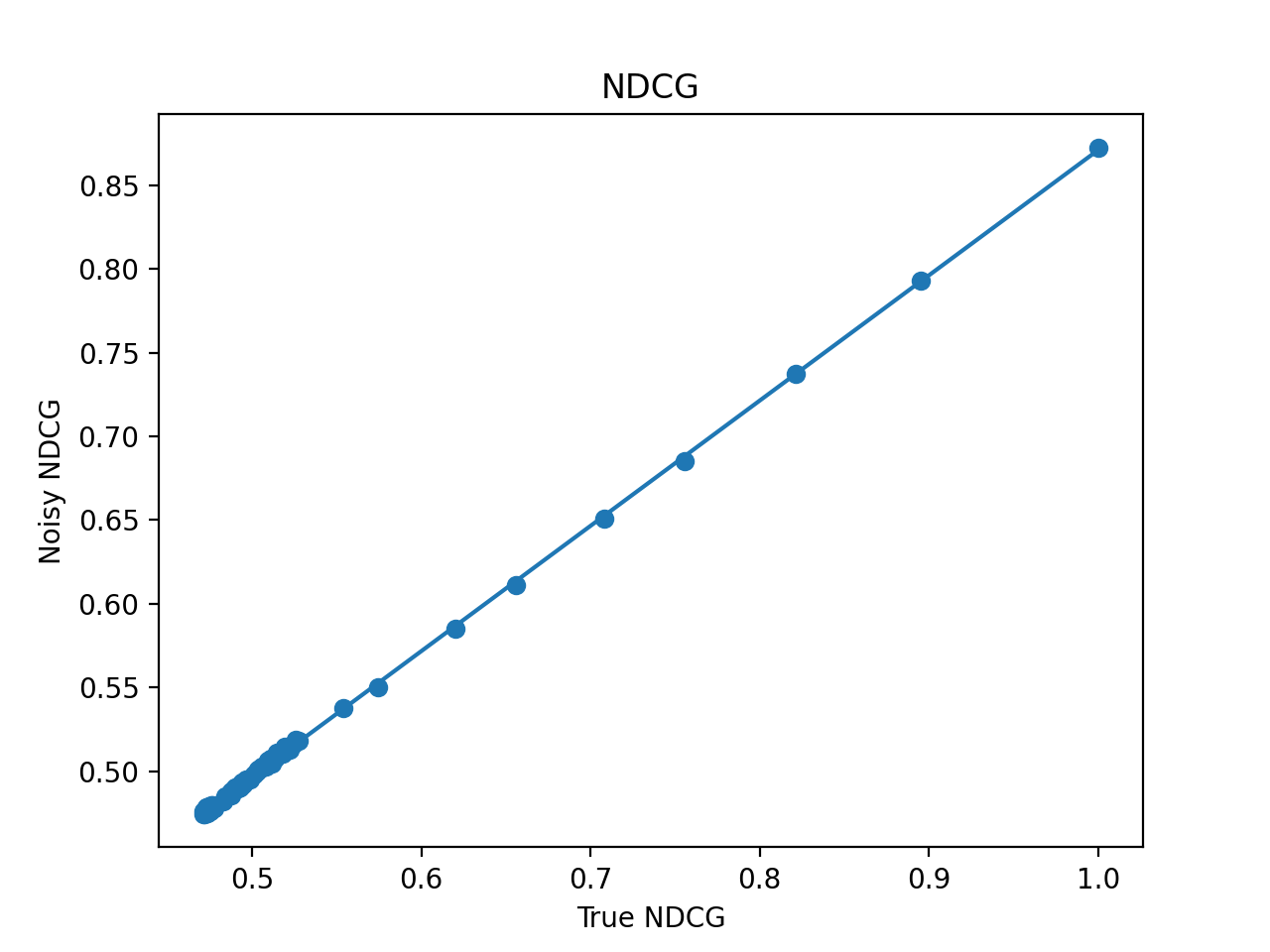}
         \caption{}
         \label{fig:ndcg}
     \end{subfigure}
     \hfill
     \begin{subfigure}[b]{0.3\textwidth}
         \centering
         \includegraphics[width=\textwidth]{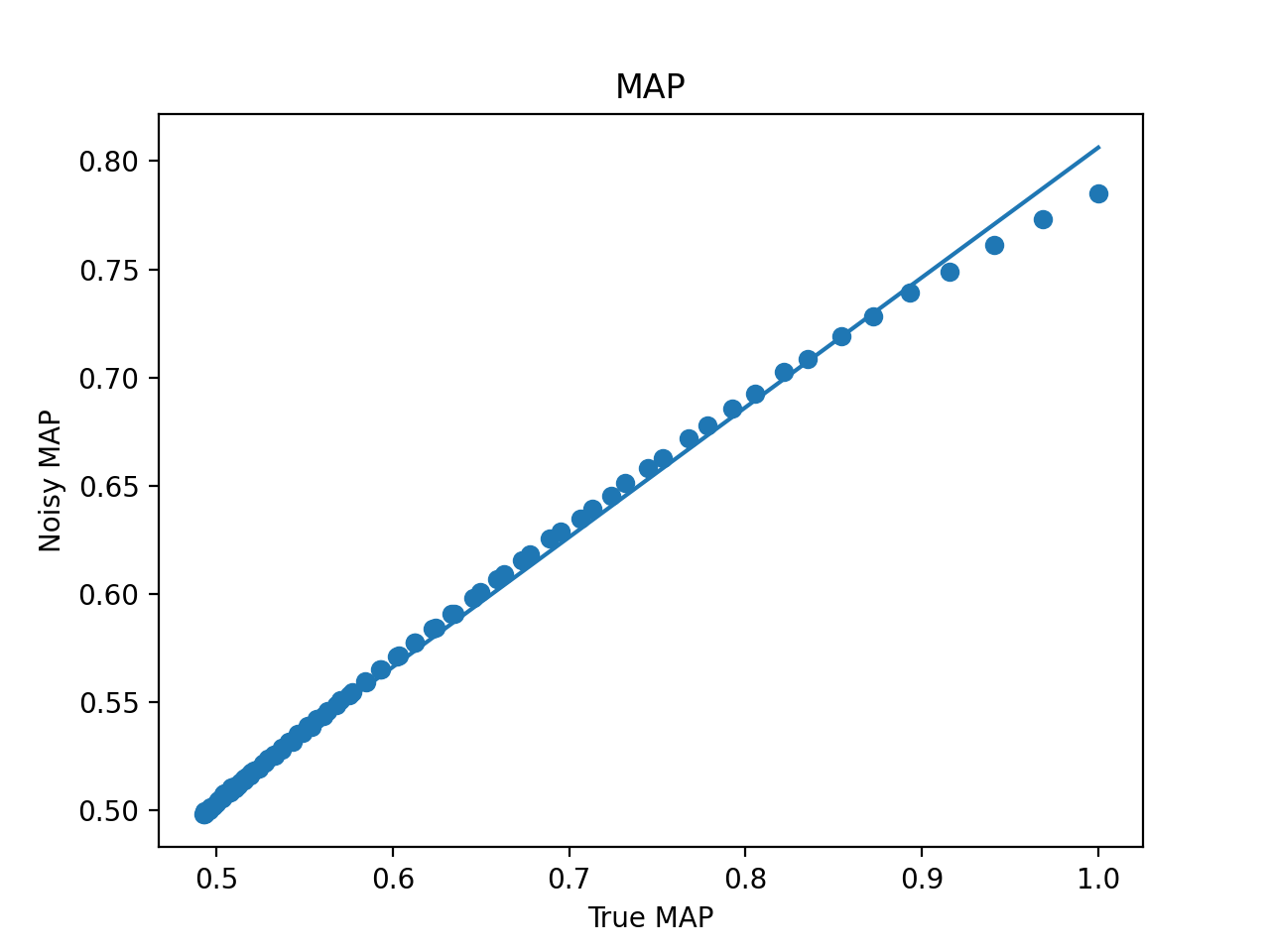}
         \caption{}
         \label{fig:map}
     \end{subfigure}
           \hfill
     \begin{subfigure}[b]{0.3\textwidth}
         \includegraphics[width=\textwidth]{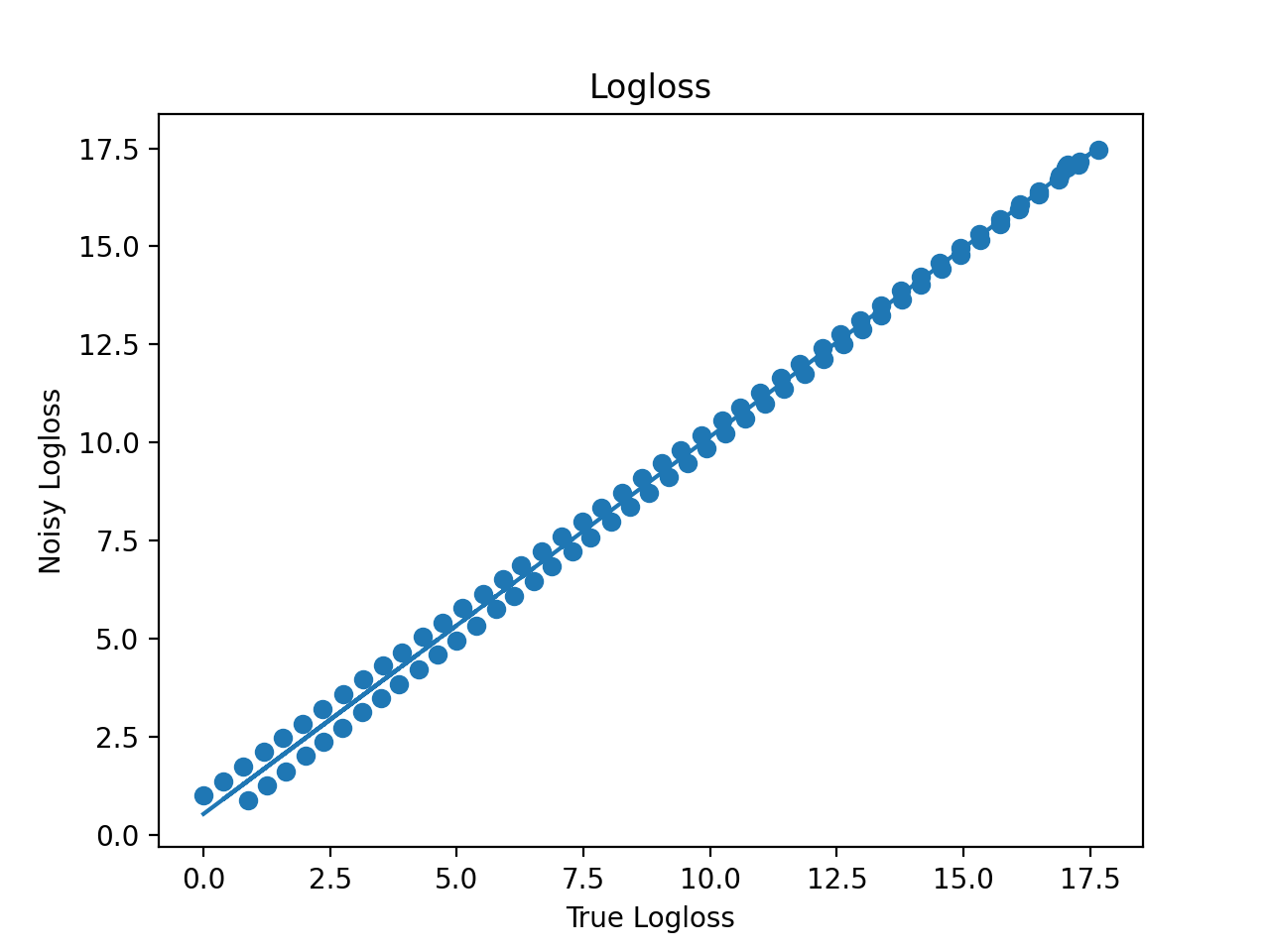}
         \caption{}
         \label{fig:logistic}
     \end{subfigure}
     \begin{subfigure}[b]{0.3\textwidth}
         \includegraphics[width=\textwidth]{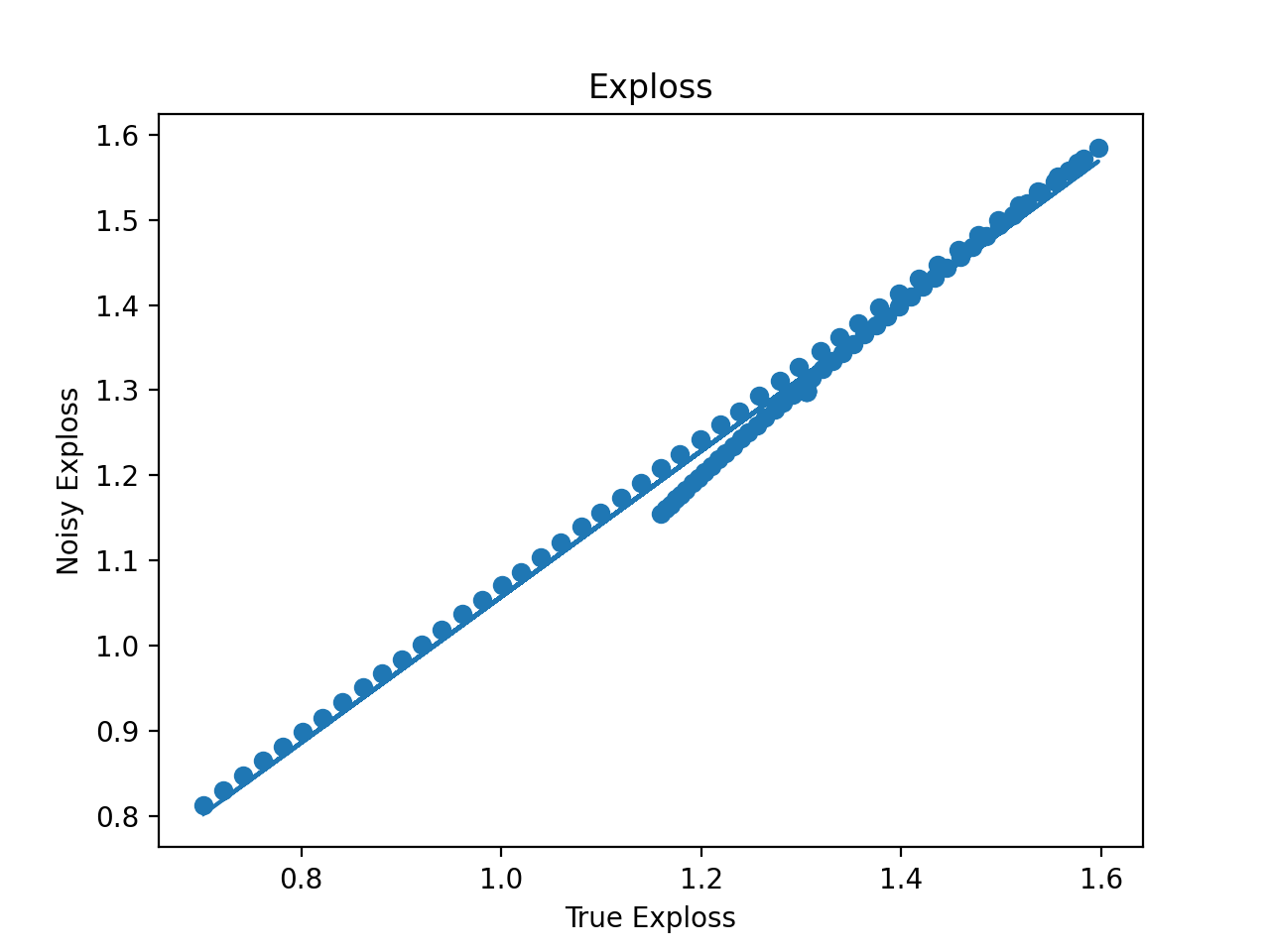}
         \caption{}
         \label{fig:exp}
     \end{subfigure}
        \caption{Simulation of the order preservation properties of
          various loss functions. The x-axis represents the clean risk, while the y-axis represents the noisy risk. For visual purposes, the line of best fit is overlayed. The exponential loss and the logistic loss are not order preserving, which can be seen in subfigures \ref{fig:logistic} and \ref{fig:exp}; notice that not all the points lie along the same line. The relationship between the noisy and true risk for MAP does not appear to be linear.}
        \label{fig:conv}
\end{figure*}

\subsection{Performance of ERM for label-symmetric losses}

In practice, we are primarily interested in selecting the optimal
scorer, rather than ordering the full collection of scorers. Motivated
by the results in section \ref{sec:nonconvex}, we explore the
practicality of the non-convex symmetrized losses introduced in
section \ref{sec:nonconvex}. We compare the performance of linear
models trained using various learning-to-rank loss functions on four datasets.

The first dataset is a synthetic learning-to-rank dataset inspired by
the experiments of \cite{mei2016}. The second is the simple 20-Newsgroups
dataset which is a collection of message board posts where the
originating group is treated as the query. The last two are the
popular LETOR datasets, MQ2007 and MQ2008, freely available from Microsoft\footnote{https://www.microsoft.com/en-us/research/project/letor-learning-rank-information-retrieval/}.

We inject class-conditional label noise over a range of $\gamma$ values for each dataset. $\gamma$ ranges over $[1, 0.9, 0.8, 0.7, 0.6, 0.51]$. We avoid $\gamma \leq 0.5$ since we do not have any theoretical justification for the selection of one loss function over another in this regime.

The synthetic datasets are generated by first sampling a $\theta_q$ at
random from an isotropic gaussian in $d$ dimensions for each
query. The document features are also sampled from an isotropic
gaussian in $d$ dimensions and the labels for each query are generated
by sampling from $\mathbb P(Y=1 \mid X) = \sigma(\langle\theta_q,\,
X\rangle)$ where $\sigma$ is the logistic function. The theoretical
results in \cite{mei2016} justify empirical risk minimization of certain non-convex losses in this context (see their theorem 4). In our experiments, the number of samples is $500$ with $5$ features.

The 20-Newsgroups dataset is preprocessed by considering a bag of words representation of the documents, removing all headers and footers to avoid leaking relevance labels into the content of the document.

We use the query-normalized versions of the MQ2007 and MQ2008 datasets and limit the labels to binary relevance. The features are further normalized to allow for faster convergence.

We train a linear model for each dataset and each choice of
$\gamma$ using four different loss functions, the commonly used
logistic and ranknet losses, in addition to their symmetrized analogs
introduced in section \ref{sec:nonconvex}. We train each model until convergence
(determined based on the estimated loss over a holdout set of the
noisy data) using the Adam optimizer \cite{kingma2014}. The learning rate and weight
decay are selected by running a grid search, with the learning rate
selected from $[\text{1e-1, 1e-2, 1e-3, 1e-4, 1e-5}]$ and the weight
decay selected from $[\text{1e-5, 1e-4, 1e-3}]$. The performance of the models over the varying noise levels are shown in figure \ref{fig:erm}.\footnote{Other optimizers led to similar results, but converged slower.}

As suggested by our theoretical results, models trained with the
symmetrized ranknet loss tend to perform better than the other models
at higher noise rates. Models trained with the more traditional losses
are competitive at the lower noise rates. For the MQ2007 and MQ2008
data sets, models trained with the symmetrized logistic loss perform
poorly (substantially worse than models trained with the other
methods) across all noise levels. The distributional assumptions of Mei et
al.'s theorem 2 and theorem 4 may not be satisfied in this case or may
be satisfied only with unfavorable constants, leading to vacuous
bounds. In particular, the gradient of the loss may not in fact be
subgaussian. Another possiblity is that the increased difficulty of the
optimization problem is not made up for by the benefit from noise tolerance for this dataset.

Despite certain loss functions not being exactly order preserving, the performance of the trained models in this section suggest that they may be approximately order preserving, in some sense. If the logistic and ranknet losses did not at all preserve the order of scoring functions in the context of class-conditional label noise, then we would see much worse performance than we see in this section for models trained using these objective functions.

\begin{figure*}[h]
     \centering
     \begin{subfigure}[b]{0.45\textwidth}
         \centering
         \includegraphics[width=\textwidth]{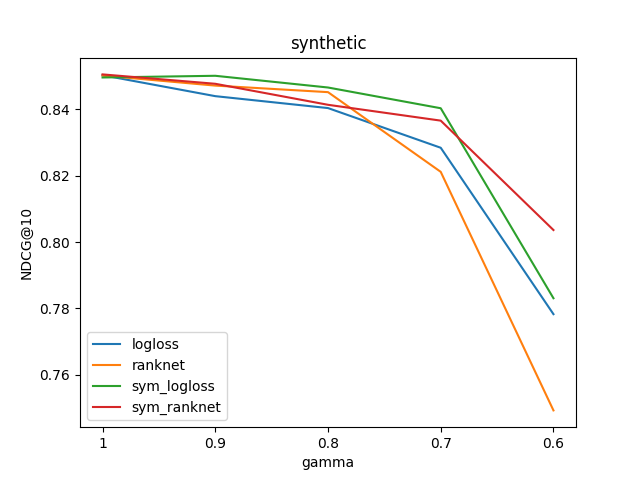}
         \caption{Synthetic dataset}
         \label{fig:synthetic}
     \end{subfigure}
     \begin{subfigure}[b]{0.45\textwidth}
         \centering
         \includegraphics[width=\textwidth]{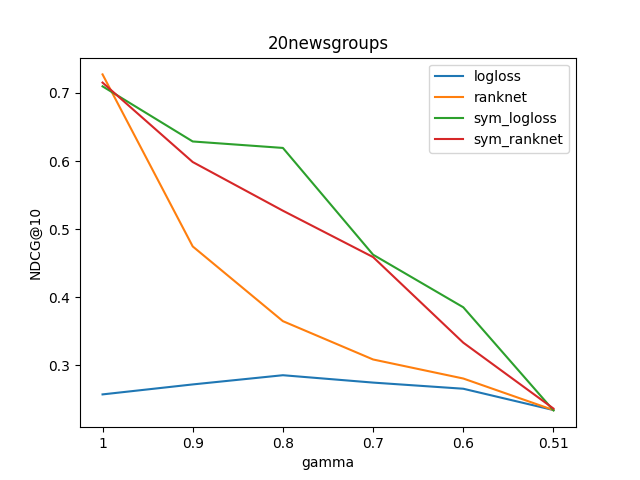}
         \caption{20-Newsgroups dataset}
         \label{fig:20newsgroups}
     \end{subfigure}
          \hfill

     \begin{subfigure}[b]{0.45\textwidth}
         \includegraphics[width=\textwidth]{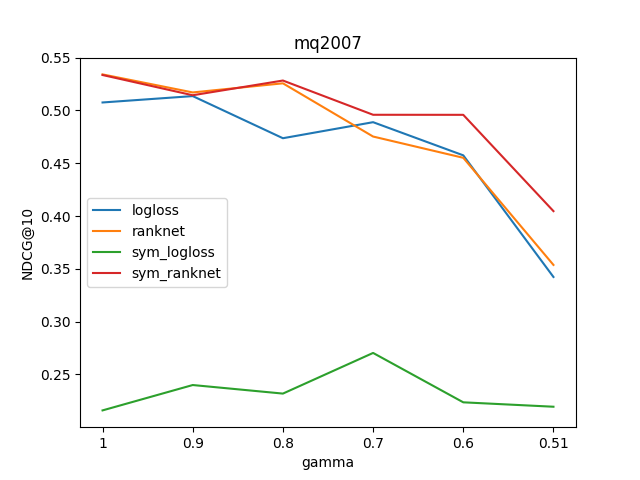}
         \caption{MQ2007 LETOR dataset}
         \label{fig:mq2007}
     \end{subfigure}
     \begin{subfigure}[b]{0.45\textwidth}
         \includegraphics[width=\textwidth]{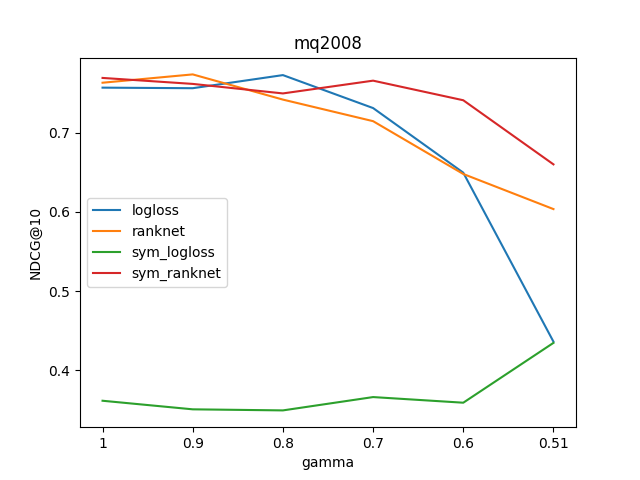}
         \caption{MQ2008 LETOR dataset}
         \label{fig:mq2008}
     \end{subfigure}
          \hfill
          \caption{NDCG@10 of models trained using various loss
            functions across four datasets and a range of noise
            conditions. $1-\gamma$ is the label flip probability, so
            as $\gamma$ decreases, the noise level increases. The
            symmetrized ranknet tends to perform best at the higher noise
            levels, while the more traditional loss functions are
            competitive at the lower noise levels. The symmetrized
            logistic loss performs poorly on the MQ2007 and MQ2008
            datasets, likely due to optimization
            difficulties or misspecified assumptions from \cite{mei2016}. MAP
            and AUC had similar behavior to NDCG@10.}
        \label{fig:erm}
      \end{figure*}

\section{Conclusion}
In this work, we showed that certain loss functions commonly used in
learning to rank applications are noise tolerant in the sense that
they are \textit{order-preserving} in the context of class-conditional
label corruptions. We further developed a sufficient
condition for a pairwise or pointwise loss to be considered
order-preserving and introduce order-preserving analogs of commonly
used loss functions. In addition, we show that empirical risk
minimization of these losses enjoys the
same properties as in the classical no-noise case; in particular, ERM
in the context of class-conditional noise is consistent and satisfies the same convergence rate
as the no-noise case, but scaled by a constant related to the
magnitude of the noise.

Experimental results support our theoretical
findings while also suggesting that some metrics that are more difficult to
analyze theoretically, NDCG and MAP, may
be order preserving.

We leave it to future work to more closely analyze additional learning to rank
losses such at MAP, NDCG and ERR. One possible route is to apply the
analysis techniques of Wang et al. \cite{wang13}  to
investigate the noise tolerance properties of NDCG. In particular,
they introduce a \textit{pseudo-expectation} of NDCG which is close to
the empirical NDCG with high probability, but is easier to analyze. A limitation of this approach is that it cannot be used to analyze NDCG losses with a finite and fixed cutoff point $k$.

Our analysis only considered the case where the noise level $\gamma$ is constant across queries. In some applications, this assumption is not likely to hold. It might be possible to show that similar results to the ones shown here hold when we instead assume that ratio between the noise level and the prevalence is constant across queries. We also leave it to future work to analyze the behavior of
learning-to-rank losses in the case of multiple relevance labels.

\bibliographystyle{ACM-Reference-Format}
\bibliography{ictir}

%%% -*-BibTeX-*-
%%% Do NOT edit. File created by BibTeX with style
%%% ACM-Reference-Format-Journals [18-Jan-2012].

\begin{thebibliography}{18}

%%% ====================================================================
%%% NOTE TO THE USER: you can override these defaults by providing
%%% customized versions of any of these macros before the \bibliography
%%% command.  Each of them MUST provide its own final punctuation,
%%% except for \shownote{}, \showDOI{}, and \showURL{}.  The latter two
%%% do not use final punctuation, in order to avoid confusing it with
%%% the Web address.
%%%
%%% To suppress output of a particular field, define its macro to expand
%%% to an empty string, or better, \unskip, like this:
%%%
%%% \newcommand{\showDOI}[1]{\unskip}   % LaTeX syntax
%%%
%%% \def \showDOI #1{\unskip}           % plain TeX syntax
%%%
%%% ====================================================================

\ifx \showCODEN    \undefined \def \showCODEN     #1{\unskip}     \fi
\ifx \showDOI      \undefined \def \showDOI       #1{#1}\fi
\ifx \showISBNx    \undefined \def \showISBNx     #1{\unskip}     \fi
\ifx \showISBNxiii \undefined \def \showISBNxiii  #1{\unskip}     \fi
\ifx \showISSN     \undefined \def \showISSN      #1{\unskip}     \fi
\ifx \showLCCN     \undefined \def \showLCCN      #1{\unskip}     \fi
\ifx \shownote     \undefined \def \shownote      #1{#1}          \fi
\ifx \showarticletitle \undefined \def \showarticletitle #1{#1}   \fi
\ifx \showURL      \undefined \def \showURL       {\relax}        \fi
% The following commands are used for tagged output and should be
% invisible to TeX
\providecommand\bibfield[2]{#2}
\providecommand\bibinfo[2]{#2}
\providecommand\natexlab[1]{#1}
\providecommand\showeprint[2][]{arXiv:#2}

\bibitem[\protect\citeauthoryear{Agarwal and Roth}{Agarwal and Roth}{2005}]%
        {agarwal2005}
\bibfield{author}{\bibinfo{person}{Shivani Agarwal} {and} \bibinfo{person}{Dan
  Roth}.} \bibinfo{year}{2005}\natexlab{}.
\newblock \showarticletitle{{Learnability of Bipartite Ranking Functions}}.
\newblock \bibinfo{journal}{\emph{Lecture Notes in Computer Science}}
  (\bibinfo{year}{2005}), \bibinfo{pages}{16--31}.
\newblock
\showISSN{0302-9743}
\urldef\tempurl%
\url{https://doi.org/10.1007/11503415\_2}
\showDOI{\tempurl}


\bibitem[\protect\citeauthoryear{Angelopoulos, Bates, Candès, Jordan, and
  Lei}{Angelopoulos et~al\mbox{.}}{2021}]%
        {angelopoulos2021}
\bibfield{author}{\bibinfo{person}{Anastasios~N Angelopoulos},
  \bibinfo{person}{Stephen Bates}, \bibinfo{person}{Emmanuel~J Candès},
  \bibinfo{person}{Michael~I Jordan}, {and} \bibinfo{person}{Lihua Lei}.}
  \bibinfo{year}{2021}\natexlab{}.
\newblock \showarticletitle{{Learn then Test: Calibrating Predictive Algorithms
  to Achieve Risk Control}}.
\newblock \bibinfo{journal}{\emph{arXiv}} (\bibinfo{year}{2021}).
\newblock
\showeprint{2110.01052}


\bibitem[\protect\citeauthoryear{Bates, Angelopoulos, Lei, Malik, and
  Jordan}{Bates et~al\mbox{.}}{2021}]%
        {bates2021}
\bibfield{author}{\bibinfo{person}{Stephen Bates}, \bibinfo{person}{Anastasios
  Angelopoulos}, \bibinfo{person}{Lihua Lei}, \bibinfo{person}{Jitendra Malik},
  {and} \bibinfo{person}{Michael~I Jordan}.} \bibinfo{year}{2021}\natexlab{}.
\newblock \showarticletitle{{Distribution-Free, Risk-Controlling Prediction
  Sets}}.
\newblock \bibinfo{journal}{\emph{arXiv}} (\bibinfo{year}{2021}).
\newblock
\showeprint{2101.02703}


\bibitem[\protect\citeauthoryear{Burges}{Burges}{2010}]%
        {burges2010}
\bibfield{author}{\bibinfo{person}{CJC Burges}.}
  \bibinfo{year}{2010}\natexlab{}.
\newblock \showarticletitle{{From RankNet to LambdaRank to LambdaMART: An
  Overview}}.
\newblock  (\bibinfo{year}{2010}).
\newblock


\bibitem[\protect\citeauthoryear{Burges, Ragno, and Le}{Burges
  et~al\mbox{.}}{2006}]%
        {burges2006}
\bibfield{author}{\bibinfo{person}{Christopher Burges}, \bibinfo{person}{Robert
  Ragno}, {and} \bibinfo{person}{Quoc~V Le}.} \bibinfo{year}{2006}\natexlab{}.
\newblock \showarticletitle{{Learning to Rank with Nonsmooth Cost Functions}}.
\newblock \bibinfo{journal}{\emph{Advances in Neural Information Processing
  Systems}} (\bibinfo{year}{2006}).
\newblock


\bibitem[\protect\citeauthoryear{Dehghani, Zamani, Severyn, Kamps, and
  Croft}{Dehghani et~al\mbox{.}}{2017}]%
        {dehghani2017}
\bibfield{author}{\bibinfo{person}{Mostafa Dehghani}, \bibinfo{person}{Hamed
  Zamani}, \bibinfo{person}{Aliaksei Severyn}, \bibinfo{person}{Jaap Kamps},
  {and} \bibinfo{person}{W~Bruce Croft}.} \bibinfo{year}{2017}\natexlab{}.
\newblock \showarticletitle{{Neural Ranking Models with Weak Supervision}}
  \emph{(\bibinfo{series}{the 40th International ACM SIGIR Conference})}.
  \bibinfo{pages}{65 -- 74}.
\newblock
\showISBNx{9781450350228}
\urldef\tempurl%
\url{https://doi.org/10.1145/3077136.3080832}
\showDOI{\tempurl}


\bibitem[\protect\citeauthoryear{Devroye, Gyorfi, and Lugosi}{Devroye
  et~al\mbox{.}}{1997}]%
        {devroye1997}
\bibfield{author}{\bibinfo{person}{Luc Devroye}, \bibinfo{person}{Laszlo
  Gyorfi}, {and} \bibinfo{person}{Gabor Lugosi}.}
  \bibinfo{year}{1997}\natexlab{}.
\newblock \showarticletitle{{A Probabilistic Theory of Pattern Recognition}}.
\newblock \bibinfo{journal}{\emph{Discrete Applied Mathematics}}
  \bibinfo{volume}{73}, \bibinfo{number}{2} (\bibinfo{year}{1997}),
  \bibinfo{pages}{192--194}.
\newblock
\showISSN{0166-218X}
\urldef\tempurl%
\url{https://doi.org/10.1016/s0166-218x(97)81417-5}
\showDOI{\tempurl}


\bibitem[\protect\citeauthoryear{Gneiting, Balabdaoui, and Raftery}{Gneiting
  et~al\mbox{.}}{2007}]%
        {gneiting}
\bibfield{author}{\bibinfo{person}{Tilmann Gneiting}, \bibinfo{person}{Fadoua
  Balabdaoui}, {and} \bibinfo{person}{Adrian~E. Raftery}.}
  \bibinfo{year}{2007}\natexlab{}.
\newblock \showarticletitle{{Probabilistic forecasts, calibration and
  sharpness}}.
\newblock \bibinfo{journal}{\emph{Journal of the Royal Statistical Society:
  Series B (Statistical Methodology)}} \bibinfo{volume}{69},
  \bibinfo{number}{2} (\bibinfo{year}{2007}), \bibinfo{pages}{243--268}.
\newblock
\showISSN{1467-9868}
\urldef\tempurl%
\url{https://doi.org/10.1111/j.1467-9868.2007.00587.x}
\showDOI{\tempurl}


\bibitem[\protect\citeauthoryear{Györfi, Kohler, Krzyżak, and Walk}{Györfi
  et~al\mbox{.}}{2002}]%
        {gyorfi2002}
\bibfield{author}{\bibinfo{person}{László Györfi}, \bibinfo{person}{Michael
  Kohler}, \bibinfo{person}{Adam Krzyżak}, {and} \bibinfo{person}{Harro
  Walk}.} \bibinfo{year}{2002}\natexlab{}.
\newblock \showarticletitle{{A Distribution-Free Theory of Nonparametric
  Regression}}.
\newblock  (\bibinfo{year}{2002}).
\newblock
\showISSN{0172-7397}
\urldef\tempurl%
\url{https://doi.org/10.1007/b97848}
\showDOI{\tempurl}


\bibitem[\protect\citeauthoryear{Haddad and Ghosh}{Haddad and Ghosh}{2019}]%
        {haddad2019}
\bibfield{author}{\bibinfo{person}{Dany Haddad} {and} \bibinfo{person}{Joydeep
  Ghosh}.} \bibinfo{year}{2019}\natexlab{}.
\newblock \showarticletitle{Learning More From Less: Towards Strengthening Weak
  Supervision for Ad-Hoc Retrieval}. In \bibinfo{booktitle}{\emph{Proceedings
  of the 42nd International ACM SIGIR Conference on Research and Development in
  Information Retrieval}} (Paris, France) \emph{(\bibinfo{series}{SIGIR'19})}.
  \bibinfo{publisher}{Association for Computing Machinery},
  \bibinfo{address}{New York, NY, USA}, \bibinfo{pages}{857–860}.
\newblock
\showISBNx{9781450361729}
\urldef\tempurl%
\url{https://doi.org/10.1145/3331184.3331272}
\showDOI{\tempurl}


\bibitem[\protect\citeauthoryear{Kingma and Ba}{Kingma and Ba}{2014}]%
        {kingma2014}
\bibfield{author}{\bibinfo{person}{Diederik~P Kingma} {and}
  \bibinfo{person}{Jimmy Ba}.} \bibinfo{year}{2014}\natexlab{}.
\newblock \showarticletitle{{Adam: A Method for Stochastic Optimization}}.
\newblock \bibinfo{journal}{\emph{arXiv}} (\bibinfo{year}{2014}).
\newblock
\showeprint{1412.6980}


\bibitem[\protect\citeauthoryear{Mei, Bai, and Montanari}{Mei
  et~al\mbox{.}}{2016}]%
        {mei2016}
\bibfield{author}{\bibinfo{person}{Song Mei}, \bibinfo{person}{Yu Bai}, {and}
  \bibinfo{person}{Andrea Montanari}.} \bibinfo{year}{2016}\natexlab{}.
\newblock \showarticletitle{{The Landscape of Empirical Risk for Non-convex
  Losses}}.
\newblock \bibinfo{journal}{\emph{arXiv}} (\bibinfo{year}{2016}).
\newblock
\showeprint{1607.06534}


\bibitem[\protect\citeauthoryear{Natarajan, Dhillon, Ravikumar, and
  Tewari}{Natarajan et~al\mbox{.}}{2018}]%
        {natarajan2018}
\bibfield{author}{\bibinfo{person}{Nagarajan Natarajan},
  \bibinfo{person}{Inderjit~S. Dhillon}, \bibinfo{person}{Pradeep Ravikumar},
  {and} \bibinfo{person}{Ambuj Tewari}.} \bibinfo{year}{2018}\natexlab{}.
\newblock \showarticletitle{{Cost-Sensitive Learning with Noisy Labels}}.
\newblock \bibinfo{journal}{\emph{Journal of Machine Learning Research}}
  (\bibinfo{year}{2018}).
\newblock


\bibitem[\protect\citeauthoryear{Wainwright}{Wainwright}{2019}]%
        {wainright2019}
\bibfield{author}{\bibinfo{person}{Martin~J Wainwright}.}
  \bibinfo{year}{2019}\natexlab{}.
\newblock \showarticletitle{{High-Dimensional Statistics}}.
\newblock  (\bibinfo{date}{12} \bibinfo{year}{2019}), \bibinfo{pages}{1--20}.
\newblock
\urldef\tempurl%
\url{https://doi.org/10.1017/9781108627771.001}
\showDOI{\tempurl}


\bibitem[\protect\citeauthoryear{Wang, Wang, Li, He, Liu, and Chen}{Wang
  et~al\mbox{.}}{2013}]%
        {wang13}
\bibfield{author}{\bibinfo{person}{Yining Wang}, \bibinfo{person}{Liwei Wang},
  \bibinfo{person}{Yuanzhi Li}, \bibinfo{person}{Di He},
  \bibinfo{person}{Tie-Yan Liu}, {and} \bibinfo{person}{Wei Chen}.}
  \bibinfo{year}{2013}\natexlab{}.
\newblock \showarticletitle{{A Theoretical Analysis of NDCG Type Ranking
  Measures}}.
\newblock \bibinfo{journal}{\emph{arXiv}} (\bibinfo{year}{2013}).
\newblock
\showeprint{1304.6480}


\bibitem[\protect\citeauthoryear{Zamani and Croft}{Zamani and Croft}{2018}]%
        {zamani2018ictir}
\bibfield{author}{\bibinfo{person}{Hamed Zamani} {and} \bibinfo{person}{W~Bruce
  Croft}.} \bibinfo{year}{2018}\natexlab{}.
\newblock \bibinfo{booktitle}{\emph{{On the Theory of Weak Supervision for
  Information Retrieval}}}.
\newblock \bibinfo{publisher}{ACM}.
\newblock
\showISBNx{978-1-4503-5656-5}
\urldef\tempurl%
\url{https://doi.org/10.1145/3234944.3234968}
\showDOI{\tempurl}


\bibitem[\protect\citeauthoryear{Zamani, Dehghani, Croft, Learned-Miller, and
  Kamps}{Zamani et~al\mbox{.}}{2018}]%
        {zamani2018acm}
\bibfield{author}{\bibinfo{person}{Hamed Zamani}, \bibinfo{person}{Mostafa
  Dehghani}, \bibinfo{person}{W~Bruce Croft}, \bibinfo{person}{Erik
  Learned-Miller}, {and} \bibinfo{person}{Jaap Kamps}.}
  \bibinfo{year}{2018}\natexlab{}.
\newblock \showarticletitle{{From Neural Re-Ranking to Neural Ranking}}
  \emph{(\bibinfo{series}{the 27th ACM International Conference})}.
  \bibinfo{pages}{497 -- 506}.
\newblock
\showISBNx{9781450360142}
\urldef\tempurl%
\url{https://doi.org/10.1145/3269206.3271800}
\showDOI{\tempurl}


\bibitem[\protect\citeauthoryear{Zhang, Lee, and Agarwal}{Zhang
  et~al\mbox{.}}{2021}]%
        {zhang2021}
\bibfield{author}{\bibinfo{person}{Mingyuan Zhang}, \bibinfo{person}{Jane Lee},
  {and} \bibinfo{person}{Shivani Agarwal}.} \bibinfo{year}{2021}\natexlab{}.
\newblock \showarticletitle{{Learning from Noisy Labels with No Change to the
  Training Process}}.
\newblock \bibinfo{journal}{\emph{Proceedings of the 38th International
  Conference on Machine Learning}} (\bibinfo{year}{2021}).
\newblock
\urldef\tempurl%
\url{https://proceedings.mlr.press/v139/zhang21k.html}
\showURL{%
\tempurl}


\end{thebibliography}

%%
%% If your work has an appendix, this is the place to put it.

\end{document}